\newcommand{\R}{\mathbb{R}}
\newcommand{\LL}{{\cal L}}
\newcommand{\D}{{\cal D}}
\newcommand{\h}{{\cal H}}
\newcommand{\eps}{\epsilon}
\newcommand{\veps}{\varepsilon}
\newcommand{\ot}{\otimes}
\newcommand{\bld}[1]{\boldsymbol{#1}}
\newcommand{\tl}[1]{\tilde{#1}}
\DeclareMathOperator{\sgn}{{\rm sgn}}
\newcommand{\we}{\wedge}
\newcommand{\bth}{\boldsymbol{\theta}}
\newcommand{\lr}{\lrcorner\,}
\newcommand{\bed}{\boldsymbol{d}}
\newcounter{mnotecount}[section]
\newtheorem{thr}{Theorem}
\newtheorem{lm}[thr]{Lemma}
\numberwithin{equation}{section}
\numberwithin{thr}{section}
\begin{document}

\title{Variables suitable for constructing quantum states for the Teleparallel Equivalent of General Relativity II\footnote{This is an author-created version of a paper published as {\em Gen. Rel. Grav.} {\bf 46} 1638 (2014) DOI 10.1007/s10714-013-1638-2}}
\author{ Andrzej Oko{\l}\'ow}
\date{July 14, 2014}

\maketitle
\begin{center}
{\it  Institute of Theoretical Physics, Warsaw University\\ ul. Pasteura 5, 02-093 Warsaw, Poland\smallskip\\
oko@fuw.edu.pl}
\end{center}
\medskip

\begin{abstract}
We present the second (and final) part of an analysis aimed at introducing variables which are suitable for constructing a space of quantum states for the Teleparallel Equivalent of General Relativity. In the first part of the analysis we introduced  a family of variables on the ``position'' sector of the phase space. In this paper we distinguish differentiable variables in the family. Then we define momenta conjugate to the differentiable variables and express constraints of the theory in terms of the variables and the momenta. Finally, we exclude variables which generate an obstacle for further steps of the Dirac's procedure of canonical quantization of constrained systems we are going to apply to the theory. As a result we obtain two collections of variables on the phase space which will be used (in a subsequent paper) to construct the desired space of quantum states. 
\end{abstract}

\section{Introduction}

In \cite{q-suit} we were searching for variables on the phase space of  the Teleparallel Equivalent of General Relativity (TEGR) which are suitable for constructing a space of kinematic quantum states for the theory via projective methods described in \cite{q-nonl}---the space of quantum states is meant to be used in a quantization of TEGR according to the Dirac's approach to canonical quantization of constrained systems. Let us briefly recall the results of \cite{q-suit}.  

The phase space of TEGR can be seen \cite{oko-tegr} as a Cartesian product $P\times \Theta$ of a space $P$ of momenta and a space $\Theta$ of configuration (``position'') variables, where $P$ consists of quadruplets of two-forms $(p_A)$, $A=0,1,2,3$, on a three-dimensional manifold $\Sigma$, while $\Theta$ does of quadruplets of one-forms $(\theta^B)$, $B=0,1,2,3$, subjected to a restriction. We will call  the fields $(p_A,\theta^B)$ {\em natural variables} on the phase space. 

It was shown in \cite{q-suit} that it is possible to construct via the projective methods a space of quantum states for TEGR using the natural variables. However, the resulting space turned out to be too large in the sense that it does not correspond strictly enough to the phase space but it does to a space essentially larger than the phase space. Therefore we were forced to look for some other variables on the phase space.

As a result we found in \cite{q-suit} a family $\{(\xi^I_\iota,\theta^J)\}$ of variables on the configuration space $\Theta$. Here $\iota$ is a parameter distinguishing elements of the family---$\iota$ is a special function on $\Theta$ valued in a set $\{-1,1\}$. Given $\iota$, $(\xi^I_\iota)$, $I=1,2,3$, is a triplet of real functions on $\Sigma$, while $(\theta^J)$, $J=1,2,3$, is a triplet of one-forms being a global coframe on the manifold. We will call the variables $(\xi^I_\iota,\theta^J)$ (and momenta conjugate to them) {\em new variables} on $\Theta$ (on the phase space). We showed that for every $\iota$ the variables $(\xi^I_\iota,\theta^J)$ satisfy some conditions which indicate that perhaps the variables may be used to construct a space of quantum states for TEGR via the projective methods. 

Now let us describe the goals of the present paper.

Introducing in \cite{q-suit} the family of new variables we neglected an issue of differentiability of the variables with respect to the natural ones---therefore it may happen that functions on the phase space like constraints and a Hamiltonian being differentiable functions (in the sense of variational calculus) of the natural variables are not differentiable functions of the new ones. 

Thus one goal of this paper is to find a criterion which will allow us to recognize differentiable variables in the family $\{(\xi^I_\iota,\theta^J)\}$. 

Moreover, we would like to find variables $(\xi^I_\iota,\theta^J)$ which not only provide a space of kinematic quantum states for TEGR but which provide a {\em useful} space of such states. To explain what we mean by ``useful'' let us recall that TEGR is a constrained system (see e.g. \cite{bl,maluf-1,maluf,oko-tegr}) and since the constraints on the phase space of TEGR are too complicated to be solved classically we are going to apply the Dirac's approach to quantize TEGR. According to the Dirac's approach one first constructs a space of quantum states corresponding to the unconstrained phase space, that is, a space of {\em kinematic} quantum states. Then among the kinematic quantum states one distinguishes {\em physical} quantum states as ones corresponding to these classical states which satisfy all constraints---in other words, one imposes ``quantum constraints'' on the kinematic quantum states. If, given space of kinematic quantum states for TEGR, it is possible to define a (workable) procedure which isolates physical quantum states from the kinematic ones then we consider this space to be {\em useful}.

The problem of defining such a workable procedure will be not solved it in this paper. Nevertheless, we will show that some new variables are problematic in the following sense: even if these variables provide a space of kinematic quantum states for TEGR then there appears an obstacle for imposing quantum constraints on this space. To describe the obstacle let us recall that according to the general construction \cite{q-nonl} the space of quantum states would be built from some functions on the phase space called {\em elementary degrees of freedom} and in \cite{q-suit} we chose d.o.f. as functions naturally defined by new variables $(\xi^I_\iota,\theta^J)$. We will show that in cases of some new variables some constraints of TEGR cannot be even approximated by any finite number of the d.o.f.. Consequently, quantum constraints cannot be imposed on any sector of the space of quantum states given by a finite number of corresponding quantum d.o.f.. This means that the quantum constraints would have to be imposed directly on the whole space or on its sectors each given by an infinite number of quantum d.o.f.. We will argue that this fact makes the task of isolating physical quantum states from the space very hard (if not impossible). But we will show also that there exist precisely two functions $\iota_1,\iota_2:\Theta\to\{-1,1\}$ such that the problem just described does not appear in the case of variables $(\xi^I_{\iota_1},\theta^J)$ and $(\xi^I_{\iota_2},\theta^J)$. 

In \cite{q-stat} we will use these variables to construct a space of kinematic quantum states for TEGR\footnote{Actually, it will turn out that $\iota_1=-\iota_2$. A consequence of this fact is that  $\xi^I_{\iota_1}=-\xi^I_{\iota_2}$ and the space of quantum states constructed from $(\xi^I_{\iota_1},\theta^J)$ coincides with that constructed from $(\xi^I_{\iota_2},\theta^J)$.} which, hopefully, will turn out to be useful.           

Thus the other goals of the paper are, given function $\iota$, 
\begin{enumerate}
\item to express the momenta $(p_A)$ as functions of the new variables on the phase space,
\item to rewrite the constraints in terms of the new variables,
\item to check whether there are obstacles for approximating the constraints by means of  finite numbers of d.o.f. defined by $(\xi^I_{\iota},\theta^J)$.
\end{enumerate}

Obviously, the constraints of TEGR should be expressed in terms of new variables not only in order to exclude problematic variables---this is also a preparatory step for defining quantum constraints on the space of quantum states we are going to construct. 

Let us emphasize that the tasks 2 and 3 above will be also completed for so-called Yang-Mills-type Teleparallel Model (YMTM). This is a theory of the same phase space as TEGR but of simpler dynamics \cite{itin,os} which may be useful as a toy-model for testing some elements of quantization procedure before they will be applied to TEGR. 

The paper is organized as follows: in Section 2 we introduce basic definitions, present a precise description of the phase space of TEGR and a definition of new variables $(\xi^I_\iota,\theta^J)$, derive some auxiliary formulae which will be used in further parts of the paper, finally we express the new variables in terms of the natural ones. In short Section 3 we address the issue of differentiability of $(\xi^I_\iota,\theta^J)$. In Section 4 we derive formulae describing the momenta $(p_A)$ as functions of any (differentiable) new variables on the phase space and formulae describing the momenta conjugate to  $(\xi^I_{\iota},\theta^J)$ as functions of the natural variables. Moreover, in this section we present the constraints (and Hamiltonians) of TEGR and of YMTM expressed in terms of new variables and check in which cases there appears the obstacle mentioned above. Section 5 contains a summary and a short discussion of results obtained in this paper. Finally, in Appendix \ref{app-TH-bij} we prove a useful lemma and in Appendix \ref{constr-der} we derive formulae expressing the constraints of TEGR and YMTM in terms of new variables on the phase space.

\section{Preliminaries}

\subsection{Vector spaces and scalar products \label{vs-sp}}

Let $\mathbb{M}$ be a four-dimensional oriented vector space equipped with a scalar product $\eta$ of signature $(-,+,+,+)$. We fix an orthonormal basis $(v_A)$ $(A=0,1,2,3)$ such that the components $(\eta_{AB})$ of $\eta$ given by the basis form the matrix ${\rm diag}(-1,1,1,1)$. The matrix $(\eta_{AB})$ and its inverse $(\eta^{AB})$ will be used to, respectively, lower and raise capital Latin letter indeces $A,B,C,D\in\{0,1,2,3\}$. The scalar product $\eta$ defines a volume form on $\mathbb{M}$---components of the form in the basis $(v_A)$ will be denoted by $\veps_{ABCD}$.

Let $\mathbb{E}$ be a subspace of $\mathbb{M}$ spanned by the vectors $\{v_1,v_2,v_3\}$. 
The scalar product $\eta$ induces on $\mathbb{E}$ a positive definite scalar product $\delta$---its components $(\delta_{IJ})$ in the basis $(v_1,v_2,v_3)$ form the matrix ${\rm diag}(1,1,1)$. The matrix $(\delta_{IJ})$ and its inverse $(\delta^{IJ})$ will be used to, respectively, lower and raise capital Latin letter indeces $I,J,K,L,M\in\{1,2,3\}$. 

We will denote by $\veps_{IJK}$ the three-dimensional permutation symbol. Note that 
\[
\veps_{0\,IJK}=\veps_{IJK}.
\]

\subsection{Phase space \label{ph-sp}}

Let $\Sigma$ be a three-dimensional compact manifold without boundary. Throughout the paper this manifold will represent a space-like slice of a spacetime. 

The phase space of TEGR is a Cartesian product $P\times\Theta$, where \cite{oko-tegr}
\begin{enumerate}
\item $\Theta$ consists of all quadruplets of one-forms $(\theta^A)$ on $\Sigma$ such that the metric
\begin{equation}
q=\eta_{AB}\theta^A\ot\theta^B
\label{q}
\end{equation}
is {\em Riemannian} (positive definite);
\item $P$ is a space of all quadruplets of two-forms $(p_A)$ on $\Sigma$.
\end{enumerate}     
The two-form $p_A$ play the role of momenta conjugate to $\theta^A$. The space $\Theta$ will be called {\em {Hamiltonian} configuration space}, $P$---{\em momentum space}. Recall that in the introduction $(p_A,\theta^B)$ were called {\em natural variables} on the phase space. The Poisson bracket of $F$ and $G$ being functions on the phase space reads
\[
\{F,G\}=\int_\Sigma\Big(\frac{\delta F}{\delta {\theta}^A}\we\frac{\delta G}{\delta p_A}-\frac{\delta G}{\delta {\theta}^A}\we\frac{\delta F}{\delta p_A}\Big).
\]  

\subsection{New variables on the Hamiltonian configuration space \label{nv-Th}}

In \cite{q-suit} we introduced {\em new variables} $(\xi^I_\iota,\theta^J)$  on $\Theta$:
\begin{lm}
Given function $\iota$ defined on the space of all global coframes on $\Sigma$ and valued in the set $\{-1,1\}$, there exists a one-to-one correspondence between elements of $\Theta$ and all pairs $(\xi^I_\iota,\theta^J)$ consisting of
\begin{enumerate}
\item functions $\xi_\iota^I$ ($I=1,2,3$)  on $\Sigma$, 
\item one-forms  $\theta^J$ ($J=1,2,3$) on $\Sigma$  constituting a global coframe on the manifold. 
\end{enumerate}  
The correspondence is given by 
\begin{equation}
(\xi^I_\iota,\theta^J)\mapsto \Big(\theta^0=\iota(\theta^L)\frac{\xi_{\iota I}}{\sqrt{1+\xi_{\iota K}\xi_{\iota}^K}}\theta^I,\theta^J\Big)\in\Theta.
\label{th-bij-eq}
\end{equation}
\label{th-bij}
\end{lm}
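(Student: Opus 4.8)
The plan is to prove the correspondence by first giving an intrinsic characterization of the elements of $\Theta$, and then exhibiting the explicit inverse of the map \eqref{th-bij-eq} and checking that the two maps compose to the identity in both orders. In this way the whole statement reduces to a pointwise algebraic bijection between $\R^3$ and the open unit ball, together with a short linear-algebra computation identifying when the metric $q$ is positive definite.

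First I would analyse when a quadruplet $(\theta^A)$ lies in $\Theta$, i.e.\ when $q=\eta_{AB}\theta^A\ot\theta^B=-\theta^0\ot\theta^0+\delta_{IJ}\theta^I\ot\theta^J$ is Riemannian. The first observation is that the one-forms $(\theta^I)$, $I=1,2,3$, must constitute a coframe: if they were linearly dependent at some point $x$, there would be a nonzero $v\in T_x\Sigma$ annihilated by all $\theta^I$, whence $q(v,v)=-\langle\theta^0,v\rangle^2\le 0$, contradicting positive definiteness. Granting that $(\theta^I)$ is a coframe, I would write $\theta^0=a_I\theta^I$ for smooth functions $a_I$ (possible since the $\theta^I$ span $T_x^*\Sigma$ pointwise), so that $q=(\delta_{IJ}-a_Ia_J)\theta^I\ot\theta^J$. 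The matrix $\delta_{IJ}-a_Ia_J$ has eigenvalue $1-a_Ka^K$ along $a$ and eigenvalue $1$ transversally, so $q$ is positive definite exactly when $a_Ka^K<1$. This yields the characterization: $(\theta^A)\in\Theta$ iff $(\theta^I)$ is a global coframe and $\theta^0=a_I\theta^I$ with $a_Ia^I<1$ everywhere.

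With this in hand the map \eqref{th-bij-eq} becomes transparent. Reading its right-hand side as $\theta^0=a_I\theta^I$ identifies $a_I=\iota\,\xi_{\iota I}/\sqrt{1+\xi_{\iota K}\xi_\iota^K}$; since $\iota^2=1$ one computes $a_Ia^I=\xi_{\iota I}\xi_\iota^I/(1+\xi_{\iota K}\xi_\iota^K)<1$, so the image lies in $\Theta$ and the map is well defined. At a fixed coframe, hence a fixed value of $\iota(\theta^L)$, the assignment $\xi_{\iota I}\mapsto a_I$ is, pointwise, the standard bijection of $\R^3$ onto the open unit ball $\{a:a_Ka^K<1\}$, whose inverse is $a_I\mapsto \iota\,a_I/\sqrt{1-a_Ka^K}$. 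I would therefore define the candidate inverse by keeping the coframe $(\theta^J)$, extracting $a_I$ from $\theta^0=a_I\theta^I$, and setting $\xi_{\iota I}=\iota(\theta^L)\,a_I/\sqrt{1-a_Ka^K}$; this is well defined precisely because $a_Ka^K<1$ guarantees a real positive square root and because $(\theta^I)$ being a coframe makes the $a_I$ unique and smooth. A direct substitution, using $1+\xi_{\iota K}\xi_\iota^K=(1-a_Ka^K)^{-1}$ in one direction and $1-a_Ka^K=(1+\xi_{\iota K}\xi_\iota^K)^{-1}$ in the other, then shows that the two maps are mutually inverse.

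I expect the only genuinely delicate point to be the characterization of $\Theta$, in particular the clean equivalence ``$q$ Riemannian $\Leftrightarrow$ $(\theta^I)$ a coframe and $a_Ia^I<1$'', since everything afterwards is the pointwise ball bijection. A secondary point worth stating explicitly is that no circularity arises from $\iota$ depending on its argument: because $\iota$ is a function of the coframe $(\theta^L)$ alone, and the coframe component is carried unchanged by both maps, the same sign $\iota(\theta^L)$ (with $\iota^2=1$) appears on both sides, which is exactly what makes the inversion consistent.
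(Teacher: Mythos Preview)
Your proof is correct and matches the approach the paper itself relies on. The lemma is not proved in this paper but recalled from \cite{q-suit}; the ingredients the paper quotes from there---namely the characterization ``$(\theta^A)\in\Theta$ iff $(\theta^J)$ is a global coframe and $\theta^0=\alpha_I\theta^I$ with $\alpha_I\alpha^I<1$'' (your $a_I$ are the paper's $\alpha_I$, see \eqref{t0-atI}) and the explicit mutually inverse formulae \eqref{xi-al} between $\alpha_I$ and $\xi^I_\iota$---are exactly your two steps, so your argument is essentially the intended one.
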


It follows from the lemma that if $(\theta^A)=(\theta^0,\theta^J)\in\Theta$ then $(\theta^J)$ is a {\em global coframe} on $\Sigma$. There are some important consequences of this fact:
\begin{enumerate}
\item given a triplet $(\theta^J)$ coming from $(\theta^A)\in\Theta$ we can associate with it a number 
\[
\sgn(\theta^J):=
\begin{cases}
1 & \text{if $(\theta^J)$ is compatible with the orientation of $\Sigma$}\\
-1 & \text{otherwise}
\end{cases}.
\]
\item each function $\iota$ can be treated as a function defined on $\Theta$ and valued in $\{-1,1\}$.
\item $\Theta$ splits into two disjoint sets $\Theta_+$ and $\Theta_-$, where $\Theta_+$ ($\Theta_-$) is  constituted by all quadruplets $(\theta^0,\theta^J)$ such that $(\theta^J)$ is compatible (incompatible) with the orientation of $\Sigma$. 
\end{enumerate}

Let us recall a useful interpretation of the variables $(\xi^I_\iota)$ \cite{q-suit}. Given $(\theta^A)$, consider the following equations \cite{nester}
\begin{align}
&\xi_A\theta^A=0, &&\xi^A\xi_A=-1
\label{xi-df}
\end{align}
imposed on a function $(\xi^A)$ on $\Sigma$ valued in $\mathbb{M}$. Note that there exists exactly two continuous solutions of these equations---indeed, the values of the function are normed time-like vectors in $\mathbb{M}$ and therefore the value of $\xi^0$ must be non-zero everywhere. Continuity of $(\xi^A)$ means that the time-like component $\xi^0$ of $(\xi^A)$ is either a positive or a negative function. Consequently, the two continuous solutions of \eqref{xi-df} can be distinguished by $\sgn(\xi^0)$ being the sign of $\xi^0$. 

Suppose that new variables $(\xi^I_\iota,\theta^J)$ corresponds to $(\theta^A)\in\Theta$ according to \eqref{th-bij}. Then $(\xi^I_\iota)$ are equal to the space-like components of this solution $(\xi^A)$ for which $\sgn(\xi^0)=\iota(\theta^J)$. More precisely, 
\begin{equation}
(\xi_\iota^A)=(\xi_\iota^0,\xi_\iota^I),
\label{xiA-xiImu}
\end{equation}
where
\begin{equation}
\xi_\iota^0=\iota(\theta^J)\sqrt{1+\xi_{\iota K}\xi_\iota^K} 
\label{xi-0}
\end{equation}
is a solution of \eqref{xi-df}---this fact can be easily proven by setting the r.h.s. of \eqref{xiA-xiImu} to \eqref{xi-df} and expressing $\theta^0$ according to \eqref{th-bij-eq}. 

\subsection{Riemannian metrics on $\Sigma$}

According to the description of the phase space presented in the previous section each point $(\theta^A)$ of $\Theta$ defines the Riemannian metric \eqref{q} on $\Sigma$. Since $(\theta^J)$ is a global coframe on $\Sigma$ the metric can be expressed as      
\begin{equation}
q=q_{IJ}\theta^I\ot\theta^J
\label{q'}
\end{equation}
---the component $(q_{IJ})$ are obviously functions of $\theta^A$ but it turns out that they can be expressed as functions of space-like components of any solution $(\xi^A)$ of \eqref{xi-df} as well as by any variables $(\xi^I_\iota)$ \cite{q-suit}:   
\begin{equation}
q_{IJ}=\delta_{IJ}-\frac{\xi_{I}\xi_{J}}{1+\xi_{K}\xi^K}=\delta_{IJ}-\frac{\xi_{\iota I}\xi_{\iota J}}{1+\xi_{\iota K}\xi^K_\iota}.
\label{q-xi}
\end{equation}
It can be easily checked that a matrix $(\bar{q}^{IJ})$ inverse\footnote{Note that in general $q^{IJ}\neq \bar{q}^{IJ}$. This is because the components $q^{IJ}$ are obtained from $q_{IJ}$ by raising the indeces $I,J$ by means of $\delta^{IJ}$. The components $q^{IJ}$ and $\bar{q}^{IJ}$ coincide if and only if $\xi^I=0$.} to $(q_{IJ})$ reads
\begin{equation}
\bar{q}^{IJ}=\delta^{IJ}+\xi_\iota^I\xi_\iota^J.
\label{bar-qIJ}
\end{equation} 

Recall that $\Sigma$ is an oriented manifold. Therefore the metric $q$ defines a volume form  $\eps$ on $\Sigma$ and a Hodge operator $*$ acting on differential forms on the manifold. Obviously, both the volume form and the Hodge operator are functions of $(\theta^A)$ and, equivalently, $(\xi^I_\iota,\theta^J)$.

\subsection{Auxiliary formulae \label{aux}}

Here we will derive some formulae which will be used in calculations throughout the paper.     

We know from \cite{q-suit} that $(\theta^A)=(\theta^0,\theta^J)\in\Theta$ if and only if $(\theta^J)$ is a global coframe on $\Sigma$ and there exists a triplet $(\alpha_I)$ of real functions on the manifold such that $\alpha_I\alpha^I<1$ and 
\begin{equation}
\theta^0=\alpha_I\theta^I.
\label{t0-atI}
\end{equation}
Let us fix a function $\iota$ on $\Theta$ and consider the corresponding solution $(\xi^A_\iota)$ of \eqref{xi-df} . Then using results obtained in \cite{q-suit} we can express relations between the functions $\alpha_I$ and $(\xi^A_\iota)$ 
\begin{align}
&&(\xi_\iota^A)=(\xi_\iota^0,\xi_\iota^I)=\iota(\theta^J)\frac{(1,\alpha^I)}{\sqrt{1-\alpha_K\alpha^K}},&&\alpha_I=\iota(\theta^J)\frac{\xi_{\iota I}}{\sqrt{1+\xi_{\iota K}\xi_\iota^K}}=\frac{\xi_{\iota I}}{\xi_\iota^0},
\label{xi-al}
\end{align}
where $\xi^0_\iota$ is given by \eqref{xi-0}.

There hold the following formulae:
\begin{align}
\frac{\partial\alpha_I}{\partial\xi_\iota^J}&=\frac{1}{\xi_\iota^0}q_{IJ}, \label{al/xi}\\
*(\theta^1\we\theta^2\we\theta^3)&=\sgn(\theta^K)|\xi^0_\iota|, \label{*ttt}\\
*(\theta^I\we\theta^J)&=\sgn(\theta^N)|\xi^0_\iota|\,q_{LK}\veps^{IJL}\theta^K,\label{*tt}
\end{align}
---in \eqref{al/xi} $\alpha_I$ is given by the second equation in \eqref{xi-al}.

\begin{proof}[Proof of \eqref{al/xi}] Let us calculate
\begin{multline*}
\frac{\partial\alpha_I}{\partial\xi_\iota^J}=\iota(\theta^L)\Big(\frac{\delta_{IJ}}{\sqrt{1+\xi_{\iota K}\xi_\iota^K}}-\frac{\xi_{\iota I}\xi_{\iota J}}{\sqrt{1+\xi_{\iota K}\xi_\iota^K}^3}\Big)=\frac{\iota(\theta^L)}{\sqrt{1+\xi_{\iota K}\xi_\iota^K}}\Big(\delta_{IJ}-\frac{\xi_{\iota I}\xi_{\iota J}}{1+\xi_{\iota K}\xi_\iota^K}\Big)=\\=\frac{1}{\xi_\iota^0}\Big(\delta_{IJ}-\frac{\xi_{\iota I}\xi_{\iota J}}{1+\xi_{\iota K}\xi_\iota^K}\Big),
\end{multline*}
where in the last step we used \eqref{xi-0}. Now it is enough to apply \eqref{q-xi}.
\end{proof}

\begin{proof}[Proof of \eqref{*ttt}] The volume form $\eps$ can be expressed as
\[
\eps=\sgn(\theta^K)\sqrt{\det q_{IJ}}\,\theta^1\we\theta^2\we\theta^3.
\]   
It was shown in \cite{q-suit} that the eigenvalues of the matrix $(q_{IJ})$ are $1$, $1$ and $1-\alpha_K\alpha^K$, hence   
\[
\det(q_{IJ})=1-\alpha_K\alpha^K.
\]
It follows from the first equation in \eqref{xi-al}  that
\[
1-\alpha_K\alpha^K=\frac{1}{1+\xi_{\iota K}\xi_\iota^K}.
\]
By virtue of the last two expressions and \eqref{xi-0}
\begin{equation}
\xi^0_\iota=\frac{\iota(\theta^J)}{\sqrt{\det(q_{IJ})}}.
\label{xi0-detq}
\end{equation}
Consequently, 
\[
\eps=\sgn(\theta^K)\frac{1}{|\xi_\iota^0|}\,\theta^1\we\theta^2\we\theta^3.
\]
Acting on both sides of this equation by the Hodge operator $*$ and taking into account that $*\eps=1$ we obtain \eqref{*ttt}.
\end{proof}

\begin{proof}[Proof of \eqref{*tt}] We have
\[
\theta^I\we\theta^J=(\delta^I{}_M\delta^J{}_N-\delta^J{}_M\delta^I{}_N)\theta^N\ot\theta^M.
\] 
 Then
\begin{equation}
*(\theta^I\we\theta^J)=\frac{1}{2}(\delta^I{}_M\delta^J{}_N-\delta^J{}_M\delta^I{}_N)\bar{q}^{MM'}\bar{q}^{NN'}\eps_{M'N'K}\theta^K,
\label{*tt-0}
\end{equation}
where $(\eps_{IJK})$ are components of the volume form $\eps$ on $\Sigma$ in the coframe $(\theta^I)$. Note that
\[
\eps_{IJK}=\sgn(\theta^L)\sqrt{\det(q_{MN})}\,\veps_{IJK}.
\]
Using this equation we obtain 
\begin{multline*}
\bar{q}^{MM'}\bar{q}^{NN'}\eps_{M'N'K}=\bar{q}^{MM'}\bar{q}^{NN'}\bar{q}^{LL'}\eps_{M'N'L'}q_{LK}=\\=\sgn(\theta^{I'})\sqrt{\det(q_{IJ})}\bar{q}^{MM'}\bar{q}^{NN'}\bar{q}^{LL'}\veps_{M'N'L'}q_{LK}=\\=\sgn(\theta^{I'})\sqrt{\det(q_{IJ})}[\det(q_{IJ})]^{-1}\veps^{MNL}q_{LK}=\sgn(\theta^{I'})\frac{q_{LK}}{\sqrt{\det(q_{IJ})}}\veps^{MNL}.
\end{multline*}
Setting this result to \eqref{*tt-0} we get
\begin{multline*}
*(\theta^I\we\theta^J)=\frac{1}{2}\sgn(\theta^{L'})(\delta^I{}_M\delta^J{}_N-\delta^J{}_M\delta^I{}_N)\frac{q_{LK}}{\sqrt{\det(q_{I'J'})}}\veps^{MNL}\theta^K=\\=\sgn(\theta^{L'})\frac{q_{LK}}{\sqrt{\det(q_{MN})}}\veps^{IJL}\theta^K=\sgn(\theta^N)|\xi_\iota^0|\,q_{LK}\veps^{IJL}\theta^K,
\end{multline*}
where in the last step we used \eqref{xi0-detq}.
\end{proof}

\subsection{The new variables on $\Theta$ in terms of the natural ones}

The formula \eqref{th-bij-eq} describes the natural variables as functions of the new ones. Let us now inverse the formula. Clearly, to inverse it it is enough to express $\xi^I_\iota$ as a function of $(\theta^A)$.  

Using \eqref{xi-0} we can rewrite the transformation \eqref{th-bij-eq} in a more compact form
\begin{equation}
(\xi^I_\iota,\theta^J)\mapsto(\theta^0,\theta^J)=\Big(\frac{\xi_{\iota I}}{\xi_\iota^0}\theta^I,\theta^J\Big).
\label{t0-xitI}
\end{equation}
Consider now the following expression   
\begin{multline*}
\frac{1}{2}*(\theta^0\we\theta^I\we\theta^J)\,\veps_{IJK}=\frac{1}{2}\frac{\xi_{\iota L}}{\xi^0_\iota}*(\theta^L\we\theta^I\we\theta^J)\,\veps_{IJK}=\frac{1}{2}\frac{\xi_{\iota L}}{\xi^0_\iota}\veps^{LIJ}*(\theta^1\we\theta^2\we\theta^3)\,\veps_{IJK}=\\=\frac{1}{2}\frac{\xi_{\iota L}}{\xi_\iota^0}\sgn(\theta^N)\,|\xi_\iota^0|\,\veps^{LIJ}\veps_{IJK}=\frac{\sgn(\theta^N)}{\iota(\theta^J)}{\xi_{\iota L}}\delta^L{}_K=\frac{\sgn(\theta^N)}{\iota(\theta^J)}\xi_{\iota K}
\end{multline*}
---in these calculations above we used \eqref{t0-xitI} in the first step and \eqref{*ttt} in the third one. Thus we obtain the inverse of \eqref{th-bij-eq}
\begin{equation}
\begin{aligned}
\xi^K_{\iota}&=\frac{1}{2}\frac{\iota(\theta^L)}{\sgn(\theta^L)}*(\theta^0\we\theta_I\we\theta_J)\,\veps^{IJK},\\
\theta^J&=\theta^J. 
\end{aligned}
\label{xi-t}
\end{equation}

\section{Differentiability of new variables $(\xi^I_\iota,\theta^J)$ \label{diff}}

Let us recall some notions used in variational calculus. A curve in $\Theta$ 
\begin{equation}
]a,b[\,\ni\lambda\mapsto(\theta^A_\lambda)\in\Theta, \quad a<0<b,
\label{la->th}
\end{equation}
is differentiable if for every (local) coordinate chart $(y^i)$ on $\Sigma$ of a domain $U$ the following map
\[
]a,b[\,\times U\ni(\lambda,y)\mapsto \theta^A_{\lambda i}(y)\in \R
\]
where $(\theta^A_{\lambda i})$ are components of $\theta^A_\lambda$ in the coordinate chart $(y^i)$, is differentiable. Then we can say that a function $F:\Theta\to\R$ is differentiable at $(\theta^A)\in\Theta$ if 
\begin{enumerate}
\item for every curve \eqref{la->th} such that $(\theta^A_0)=(\theta^A)$ a map $\lambda\mapsto F(\theta^A_\lambda)$ is differentiable at $\lambda=0$; 
\item the variation
\[
\delta F(\theta^A):=\frac{d}{d\lambda}\Big|_{\lambda=0}F(\theta^A_\lambda)
\]    
is a linear function of the variation 
\[
\delta\theta^A:=\frac{d}{d\lambda}\Big|_{\lambda=0}\theta^A_\lambda.
\]
\end{enumerate}

In an analogous way we can define a differentiable curve
\begin{equation}
]a,b[\,\ni\lambda\mapsto(\xi^I_{\iota\lambda},\theta^J_\lambda)\in\Theta, \quad a<0<b,
\label{la->xith}
\end{equation}
and define a differentiability of the same function $F:\Theta\to\R$ by means of maps $\lambda\mapsto F(\xi^I_{\iota\lambda},\theta^J_\lambda)$.  

Thus we defined two notions of differentiability of $F$: one uses the natural variables and the other does the new ones. Of course, we would like both notions of differentiability to coincide---then we will say that the new variables are differentiable with respect to the natural ones (and vice versa). It is not difficult to realize that both notions coincide if $(i)$ the differentiability of \eqref{la->th} guarantees the differentiability of 
\begin{equation}
]a,b[\,\ni\lambda\mapsto\xi^K_{\iota\lambda}:=\frac{1}{2}\frac{\iota(\theta^L_\lambda)}{\sgn(\theta^L_\lambda)}*_\lambda(\theta^0_\lambda\we\theta_{I\lambda}\we\theta_{J\lambda})\,\veps^{IJK}
\label{la->xi}
\end{equation}
(see \eqref{xi-t}) and $(ii)$ if the differentiability of \eqref{la->xith}  guarantees the differentiability of
\begin{equation}
]a,b[\,\ni\lambda\mapsto\theta^0_\lambda:=\iota(\theta^L_\lambda)\frac{\xi_{\iota\lambda I}}{\sqrt{1+\xi_{\iota \lambda K}\xi_{\iota\lambda}^K}}\theta^I_\lambda.
\label{la->th0}
\end{equation}
(see \eqref{th-bij-eq}). Clearly, the only obstacle for the differentiability of \eqref{la->xi} and \eqref{la->th0} may be the function $\lambda\mapsto\iota(\theta^L_\lambda)$. Since $\iota$ is valued in the set $\{-1,1\}$ it is necessary and sufficient for the curves \eqref{la->xi} and \eqref{la->th0} to be differentiable to require that the function $\lambda\mapsto\iota(\theta^L_\lambda)$ is constant. Then the variations $(\delta\xi^I_\iota,\delta\theta^J)$ will be linear functions of $(\delta\theta^A)$ and vice versa.  

This result means that $(\xi^I_\iota,\theta^J)$ are differentiable with respect to the natural variables if $\iota$ is a constant function on every path-connected subset of $\Theta$---here a subset $\Theta_0\subset\Theta$ is path-connected if every pair of points of $\Theta_0$ can be connected by a path being a composition of finite number of differentiable curves \eqref{la->th}. Such a function $\iota$ will be called {\em admissible}. 

It is clear, that no point of $\Theta_+$ can be connected by such a path with any point of $\Theta_-$. This means that a set of all admissible  functions $\iota$ consists at least of four elements: $(i)$ a constant function equal $1$, $(ii)$ a constant function equal $-1$, $(iii)$ a function $\iota(\theta^J)=\sgn(\theta^J)$ and $(iv)$ a function $\iota(\theta^J)=-\sgn(\theta^J)$.           

Since now we will use merely differentiable variables $(\xi^I_\iota,\theta^J)$. Then $\iota$ is a constant function on every path-connected subset of $\Theta$ and therefore while  calculating any derivatives of formulae containing $\iota(\theta^I)$ we will treat it as a constant number.

\section{Constraints of TEGR and YMTM in terms of new variables \label{3+1}}

In this section we will express the natural momenta $(p_A)$ as function of the new variables $(\xi^I_\iota,\theta^J)$ and momenta conjugate to them. Then we will present the constraints of both TEGR and YMTM expressed in terms of the new variables on the phase space. Finally we will check which variables cannot be used to define quantum constraints.

\subsection{Momenta conjugate to new variables $(\xi^I_\iota,\theta^J)$ \label{new-mom}}

Since new variables $(\xi^I_\iota,\theta^J)$ on $\Theta$ are invertible and differentiable functions of the natural variables there should exist momenta conjugate to them. {Given function $\iota$ and value of the index $I$,} the variable $\xi^I_\iota$ is a zero-form on $\Sigma$ hence the momentum conjugate to it should be a three-form on the manifold which will be denoted by $\zeta_{\iota I}$---the exterior product of a variable and the momentum conjugate to it is a differential form of the degree equal the dimension of the manifold on which the forms are defined \cite{ham-diff,mielke} (see also \cite{os}). Analogously, the momentum  conjugate to $\theta^J$ should be a two-form on $\Sigma$ which will be denoted by $r_{ J}$.

Our goal now is to find a relation between the natural variables $(p_A,\theta^B)$ on the phase space and the new ones $(\zeta_{\iota I},r_{ J},\xi^K_\iota,\theta^L)$. Let
\begin{align*}
\zeta_{\iota I}&=a_I(p_A,\theta^B), & r_{ J}&=b_J(p_A,\theta^B)
\end{align*}    
To find the unknown functions $a_I$ and $b_J$ we could require that the functions together with \eqref{xi-t} define a canonical transformation on the phase space. This would give us partial differential equations imposed on $a_I$ and $b_J$. Alternatively, we could step back to the Lagrangian formulation of TEGR (and YMTM) and consider variables on the Lagrangian configuration space which define via the Legendre transformation the natural variables $(p_A,\theta^B)$ on the phase space. Then we could find new variables on the Lagrangian configuration space which define new variables $(\xi^I_\iota,\theta^J)$ on the Hamiltonian configuration space. Finally, comparing the Legendre transformation applied to both  sorts of variables we could find the functions $a_I$ and $b_J$. Since the former method requires solving partial differential equations and the latter one involves merely differentiation we will find $a_I$ and $b_J$ using the latter method.   

{Let us finally emphasize that in the considerations below we will repeatedly use the interior product\footnote{Let $\omega$ be a differential $k$-form and $X$ a vector field on a manifold. If $k>0$ then the interior product $X\lr\omega$ is a $(k-1)$-form such that for any vector fields $X_1,\ldots,X_{k-1}$ 
\[
(X\lr\omega)(X_1,\ldots,X_{k-1}):=\omega(X,X_1,\ldots,X_{k-1}),
\]      
if $k=0$ then $X\lr\omega:=0$.} $X\lr\omega$ of a vector field $X$ and a differential form $\omega$.}

\subsubsection{ADM-like variables on the Lagrangian configuration space}

Let $\cal M$ be a four-dimensional oriented. Let $(\bth^A)$, $A=0,1,2,3$, be a global {\em coframe} or a {\em cotetrad field} on the manifold compatible with its orientation. Then $\bth:=\bth^A\ot v_A$ is a one-form on $\cal M$ valued in $\mathbb{M}$ which can be used to pull back the scalar product $\eta$ on $\mathbb{M}$ to a Lorentzian metric on $\cal M$ turning thereby the manifold into a {\em spacetime}. The resulting metric $g$ reads
\begin{equation}
g:=\eta_{AB}\bth^A\ot\bth^B.
\label{g}
\end{equation}

The space of all such coframes $(\bth^A)$ is a Lagrangian configuration space for both\footnote{In fact, TEGR and YMTM can be defined on the space of all global cotetrad fields on $\cal M$ compatible and incompatible with the orientation---see e.g. \cite{mal-rev} and references therein. In \cite{oko-tegr} and \cite{os} we restricted ourselves to cotetrads compatible with the orientation because it simplified Hamiltonian formulations of the theories described in these papers.} TEGR and YMTM. 

To carry out the $3+1$ decomposition of the manifold $\cal M$  and a cotetrad field $(\bth^A)$ on it needed for the Legendre transformation we impose on them the following Assumptions:
\begin{enumerate}
\item ${\cal M}=\R\times\Sigma$. Since $\Sigma$ is oriented manifold we assume that the orientation of $\cal M$ is compatibles with the natural orientation of $\R\times\Sigma$.  
\item the cotetrad $(\bth^A)$ is such that for every $t\in\R$ the submanifold $\Sigma_t:=\{t\}\times\Sigma$ is space-like with respect to $g$.
\end{enumerate}

Assumption 1 allows us to introduce a family of curves in $\cal M$ parameterized by points of $\Sigma$---given $y\in\Sigma$ we define
\begin{equation}
\R\ni t\mapsto (t,y)\in\R\times\Sigma={\cal M}.
\label{curv}
\end{equation}
These curves generates a global vector field on $\cal M$ which will be denoted by $\partial_t$. By virtue of Assumption 1 there exists a real function on $t$ on $\cal M$ such that $t(y)=\tau$ if and only if $y\in\Sigma_{\tau}$. Moreover, Assumption 1 provides a family of natural embeddings $\varphi_t:\Sigma\to\Sigma_t\in{\cal M}$.

The space of all cotetrad fields $(\bth^A)$ on $\cal M$ which satisfy Assumption 2  will be called {\em the restricted Lagrangian configuration space} and will be denoted by $\bld{\Theta}$. 
Let us now describe variables which result from $3+1$ decomposition of the cotetrad fields, parameterize the space $\bld{\Theta}$ and lead to ADM-like Hamiltonian formulations of theories considered in \cite{os} and \cite{nester,oko-tegr} (the lemma below summarizes and makes more precise some facts described in \cite{nester,os,oko-tegr}; its proof can be found in Appendix \ref{app-TH-bij}):

\begin{lm}
There exists a one-to-one correspondence between elements of $\bld{\Theta}$ and all triplets $(N,\vec{N},\theta^A)$ consisting of 
\begin{enumerate}
\item a function $N$ on $\cal M$ which is positive everywhere;  
\item a vector field $\vec{N}$ on $\cal M$ tangent everywhere to the foliation $\{\Sigma_t\}_{t\in\R}$;   
\item one-forms $\theta^A$ ($A=0,1,2,3$) on $\cal M$ such that 
\begin{enumerate}
\item[a)] $\partial_t\lr\theta^A=0$; 
\item[b)] for every $t\in\R$ the tensor field
\[
q:=\eta_{AB}\theta^A\ot\theta^B
\] 
induces via the pull-back $\varphi^*_t$ a Riemannian (positive definite) metric on $\Sigma$.  
\end{enumerate}  
\end{enumerate}  
The correspondence is given by the following map
\begin{equation}
(N,\vec{N},\theta^A)\mapsto (\bth^A)=\Big(\bed t[-N\frac{1}{3!}\veps^A{}_{BCD}*(\theta^B\we\theta^C\we\theta^D)+\vec{N}\lr\theta^A]+\theta^A\Big)\in\bld{\Theta},
\label{NNth-bth}
\end{equation}
where $*$ is a Hodge operator acting on forms on $\Sigma_t$ defined by the induced metric on the manifold, and $\bed$ is the exterior derivative on $\cal M$.  
\label{TH-bij}
\end{lm}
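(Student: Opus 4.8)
The plan is to establish the bijection by constructing an explicit inverse to the map \eqref{NNth-bth} and verifying that the two constructions are mutually inverse. First I would clarify the direction that is essentially definitional: given a triplet $(N,\vec{N},\theta^A)$ with the stated properties, the formula \eqref{NNth-bth} produces a quadruplet $(\bth^A)$ of one-forms on $\cal M$, and I would check that this quadruplet lies in $\bld{\Theta}$, i.e. that it is a genuine global cotetrad compatible with the orientation and that each slice $\Sigma_t$ is space-like with respect to the induced metric $g=\eta_{AB}\bth^A\ot\bth^B$. The key computation here is to verify that the pull-back $\varphi_t^*g$ coincides with the Riemannian metric $\varphi_t^* q$ on $\Sigma$ (so that slices are space-like), and that the component of $g$ in the $\bed t$ direction reproduces the usual ADM lapse-shift structure, with $N>0$ guaranteeing Lorentzian signature and compatibility with orientation.

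The more substantial direction is to recover $(N,\vec{N},\theta^A)$ from an arbitrary $(\bth^A)\in\bld{\Theta}$. Given such a cotetrad, I would first define $\theta^A:=\bth^A-(\partial_t\lr\bth^A)\,\bed t$; this is the natural ``spatial projection'' and manifestly satisfies condition (a), namely $\partial_t\lr\theta^A=0$. Assumption~2 (that each $\Sigma_t$ is space-like) should translate directly into condition (b), i.e. that $q=\eta_{AB}\theta^A\ot\theta^B$ pulls back to a Riemannian metric on each slice. The lapse $N$ and shift $\vec{N}$ are then read off from the $\partial_t\lr\bth^A$ data: decomposing the time-like normal to the foliation in the frame $(\bth^A)$ yields $N$ as the normalization of the normal covector and $\vec N$ as the tangential part of $\partial_t$. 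I would then substitute these definitions back into \eqref{NNth-bth} and check that the original $(\bth^A)$ is reproduced, using the Hodge-star identity that expresses the unit normal covector as $\tfrac{1}{3!}\veps^A{}_{BCD}*(\theta^B\we\theta^C\we\theta^D)$ on each slice.

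I expect the main obstacle to be the bookkeeping in identifying the lapse and shift intrinsically and showing consistency of the normalization. The quantity $\tfrac{1}{3!}\veps^A{}_{BCD}*(\theta^B\we\theta^C\we\theta^D)$ must be shown to equal the future-pointing (or orientation-compatible) unit normal $n^A$ to $\Sigma_t$ with respect to $g$, which requires care with signs, with the positivity of $N$, and with the fact that the Hodge operator in \eqref{NNth-bth} is the one defined by the induced metric on $\Sigma_t$ rather than by $g$ on $\cal M$. Establishing $\eta_{AB}n^A n^B=-1$ and $n_A\theta^A=0$ is where the signature and the Riemannian condition (b) interact delicately. Once the normal is correctly identified, the remaining verifications---that the map and its candidate inverse compose to the identity in both orders---reduce to linear algebra in the fibre $\mathbb{M}$ and should follow by direct substitution. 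I would relegate these explicit component computations to the appendix, as the statement indicates.
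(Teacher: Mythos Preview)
Your proposal is correct and follows essentially the same route as the paper's proof: define the spatial projection $\theta^A=\bth^A-(\partial_t\lr\bth^A)\,\bed t$, identify the unit normal via the Hodge-star expression, decompose $\partial_t\lr\bth^A$ into lapse and shift, and verify that $g$ restricted to the slices coincides with $q$. The only organizational difference is that the paper first isolates, as a separate auxiliary lemma, the fact that any $\mathbb{M}$-valued function $\phi^A$ admits a \emph{unique} decomposition $\phi^A=M\xi^A+\vec{M}\lr\theta^A$ (with $\xi^A=-\tfrac{1}{3!}\veps^A{}_{BCD}*(\theta^B\we\theta^C\we\theta^D)$); this lemma is then invoked both for injectivity and for surjectivity, and the positivity of $N$ in the surjective direction is obtained from the determinant identity $\det(\bth^A_\alpha)=N\sqrt{\det(q_{ij})}$ rather than from the ADM normalization you sketch.
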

\noindent Let us emphasize that desiring to keep the notation as simple as possible we did not decide to introduce a new symbol to denote the quadruplet of one-forms on $\cal M$ appearing in the lemma just stated and used the symbol $(\theta^A)$ which earlier denoted a quadruplet of one-forms on $\Sigma$. In the sequel the symbol will denote forms either on $\cal M$ or on $\Sigma$ (similarly, soon $(\xi^A)$ and $(\xi^I_\iota)$ will denote functions either on $\cal M$ or on $\Sigma$) and we hope that the meaning will be clear from the context.

As shown in \cite{os} ${N}$ and $\vec{N}$ coincide with, respectively, the ADM lapse and the ADM shift vector field \cite{adm}. It turns out \cite{oko-tegr,os} that actions of TEGR and YMTM do not contain time derivatives\footnote{Here the ``time derivative'' means the Lie derivative with respect to the vector field $\partial_t$.} of $N$ and $\vec{N}$ therefore the two variables can be treated as Lagrangian multipliers in the resulting Hamiltonian formulations. Moreover, it is clear that for every $t\in\R$ a quadruplet $(\theta^A)$ of one-forms on $\cal M$ described in Lemma \eqref{TH-bij} defines an element of the Hamiltonian configuration space $\Theta$ by means of the pull-back  with respect to the natural embedding $\varphi_t$. In other words, the quadruplet defines a curve in $\Theta$ parameterized by $t$. Obviously, this curve is differentiable in the sense described in Section \ref{diff}.   

The ADM-like variables $(N,\vec{N},\theta^A)$ on the restricted Lagrangian configuration space $\bld{\Theta}$ define via the Legendre transformation the natural variables $(p_A,\theta^A)$ on the phase space of TEGR and YMTM \cite{oko-tegr,os} together with Lagrangian multipliers $N$ and $\vec{N}$.        

\subsubsection{New variables on the restricted configuration space} 

Here we will introduce new variables on $\bld{\Theta}$ which, once the Legendre transformation has been carried out, define on the phase space the new variables $(\zeta_{\iota I},r_{ J},\xi^K_\iota,\theta^L)$. 

Let $(\theta^A)$ be a quadruplet of one-forms on $\cal M$ appearing in Lemma \ref{TH-bij}. As mentioned at the end of the previous section it defines a {\em differentiable} curve in the Hamiltonian configuration space $\Theta$:
\[
\R\ni t\mapsto(\varphi^*_t\theta^A)\in \Theta.
\]  
Therefore for every function $\iota$ which defines differentiable new variables on $\Theta$ the value $\iota(\varphi_t^*\theta^J)$ is independent of $t$. This value will be denoted by $\iota(\theta^J)$ since it characterizes the triplet $(\theta^J)$ of one-forms on $\cal M$. 

\begin{lm}
Given admissible function $\iota$ on $\Theta$, there exists a one-to-one correspondence between elements of $\bld{\Theta}$ and all quadruplets $(N,\vec{N},\xi_\iota^I,\theta^J)$ consisting of 
\begin{enumerate}
\item a function $N$ on $\cal M$ which is positive everywhere;  
\item a vector field $\vec{N}$ on $\cal M$ tangent everywhere to the foliation $\{\Sigma_t\}$;   
\item functions $\xi_\iota^I$ ($I=1,2,3$)  on $\cal M$, 
\item one-forms  $\theta^J$ ($J=1,2,3$) on $\cal M$ such that
\begin{enumerate}
\item[a)] $\partial_t\lr\theta^J=0$; 
\item[b)] for every $t\in\R$ the triplet $(\theta^J)$ induced via the pull-back $\varphi^*_t$ a global coframe on $\Sigma$. 
\end{enumerate}  
\end{enumerate}  
The correspondence is given by 
\begin{equation}
\begin{aligned}
(N,\vec{N},\xi_\iota^I,\theta^J)&\mapsto (\bth^A)=(\bth^0,\bth^I)\in\bld{\Theta},\\
\bth^0&=\bed t[N\sgn(\theta^J)\sqrt{1+\xi_{\iota K}\xi^K_\iota}+\vec{N}\lr\frac{\iota(\theta^J)\xi_{\iota I}}{\sqrt{1+\xi_{\iota K}\xi^K_\iota}}\theta^I]+\frac{\iota(\theta^J)\xi_{\iota I}}{\sqrt{1+\xi_{\iota K}\xi^K_\iota}}\theta^I,\\
\bth^I&=\bed t[N\sgn(\theta^J)\iota(\theta^J)\xi_\iota^I+\vec{N}\lr\theta^I]+\theta^I.
\end{aligned}
\label{NNxth-bth}
\end{equation}
\label{TH-bij-n}
\end{lm}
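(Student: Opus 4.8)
The plan is to obtain the asserted correspondence as the composition of two bijections that are already available: Lemma~\ref{TH-bij}, which identifies $\bld{\Theta}$ with triplets $(N,\vec{N},\theta^A)$, and Lemma~\ref{th-bij}, which identifies a quadruplet $(\theta^0,\theta^J)$ belonging to the Hamiltonian configuration space $\Theta$ with a pair $(\xi^I_\iota,\theta^J)$. First I would note that the four one-forms $\theta^A$ appearing in Lemma~\ref{TH-bij} satisfy $\partial_t\lr\theta^A=0$, so they are determined by the $t$-parametrized family of their pull-backs $\varphi^*_t\theta^A$, each of which is an element of $\Theta$ by condition (b) of that lemma. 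Applying Lemma~\ref{th-bij} to $(\varphi^*_t\theta^0,\varphi^*_t\theta^J)$ at each $t$ replaces $\theta^0$ by the triplet of functions $\xi^I_\iota$, leaving $\theta^J$ untouched. Since $\iota$ is admissible, the value $\iota(\varphi^*_t\theta^J)$ is independent of $t$ (as recalled just before the statement), so this fiberwise construction is consistent across the foliation and the resulting $\xi^I_\iota$ assemble into genuine functions on ${\cal M}$; this is the one point where admissibility is essential.

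With the composition in place, bijectivity is automatic, but I would still check that the two lists of conditions match. Condition (a) $\partial_t\lr\theta^J=0$ and condition (b) that $(\theta^J)$ induce a global coframe on $\Sigma$ are exactly the spatial part of the hypotheses of Lemma~\ref{TH-bij}, while the datum $\theta^0$ (equivalently $\xi^I_\iota$) carries no independent $\partial_t$-content because $\theta^0=\iota(\theta^J)\xi_{\iota I}(1+\xi_{\iota K}\xi^K_\iota)^{-1/2}\theta^I$ is built algebraically from $\theta^J$ and the functions $\xi^I_\iota$. Thus the map $(N,\vec{N},\theta^A)\mapsto(N,\vec{N},\xi^I_\iota,\theta^J)$ and its inverse \eqref{th-bij-eq} are well defined, and the correspondence of the lemma is their composite with the bijection of Lemma~\ref{TH-bij}.

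It then remains to verify that substituting the expression \eqref{th-bij-eq} for $\theta^0$ into the explicit map \eqref{NNth-bth} reproduces \eqref{NNxth-bth}. The terms $\vec{N}\lr\theta^A+\theta^A$ go over directly, so the work is confined to the two Hodge-star coefficients of $\bed t$. For $A=0$ I would raise the index using $\veps^0{}_{IJK}=-\veps_{IJK}$, contract $\veps_{IJK}\veps^{IJK}=3!$, and apply \eqref{*ttt} together with $|\xi^0_\iota|=\sqrt{1+\xi_{\iota K}\xi^K_\iota}$ (from \eqref{xi-0}) to get $-\tfrac{1}{3!}\veps^0{}_{BCD}*(\theta^B\we\theta^C\we\theta^D)=\sgn(\theta^K)\sqrt{1+\xi_{\iota K}\xi^K_\iota}$, which is the coefficient of $N$ in $\bth^0$. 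For $A=I$ exactly one of the summed indices is $0$; collecting the three equal contributions and using $\veps^I{}_{0JK}=-\veps^I{}_{JK}$ reduces the coefficient to $\tfrac12\veps^{IJK}*(\theta^0\we\theta_J\we\theta_K)$, which by the inversion formula \eqref{xi-t} equals $\sgn(\theta^J)\iota(\theta^J)\xi^I_\iota$, matching $\bth^I$.

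The main obstacle is not computational but conceptual: one must be careful that Lemma~\ref{th-bij}, stated for forms on $\Sigma$, is applied legitimately to the horizontal forms on ${\cal M}$, and that the sign function $\sgn(\theta^J)$ and the value $\iota(\theta^J)$ are genuinely $t$-independent so that the composite map is fiberwise consistent. The remaining difficulty is purely bookkeeping: keeping track of the sign produced by raising the time-like index in $\veps^A{}_{BCD}$ and of the $\sgn\cdot\iota$ factors, which is where sign errors would most easily arise.
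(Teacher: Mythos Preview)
Your proposal is correct and follows essentially the same route as the paper: compose the bijection of Lemma~\ref{TH-bij} with the fiberwise bijection of Lemma~\ref{th-bij} (using admissibility of $\iota$ to make the latter $t$-consistent), then substitute \eqref{th-bij-eq} into \eqref{NNth-bth} and evaluate the two Hodge-star coefficients. The only minor difference is that for the $A=I$ coefficient you invoke the inversion formula \eqref{xi-t} directly, whereas the paper expands $\theta^0$ via \eqref{th-bij-eq} and recomputes; since \eqref{xi-t} was derived precisely that way, the two arguments are equivalent.
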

\noindent Let us note that this description of the space $\bld{\Theta}$ may be used e.g. to derive a Hamiltonian formulation of TEGR in a gauge which fixes values of the variables $(\xi^I_\iota)$---then the only ``position'' variable on the resulting phase space would be the global coframe $(\theta^J)$ on $\Sigma$.  

\begin{proof}
Note that due to Condition 4a of the lemma and Condition 3a of Lemma \ref{TH-bij} the one-forms $(\theta^I)$ and $(\theta^A)$ on $\cal M$ appearing in the lemmas can be restored from the families $(\varphi^*_t\theta^I)$ and $(\varphi^*_t\theta^A)$ of one-forms on $\Sigma$. Therefore we can use Lemma \ref{th-bij} to establish a one-to-one correspondence between the one-forms $(\theta^A)$ on $\cal M$ satisfying the requirements of Lemma \ref{TH-bij} and the fields $(\xi_\iota^I,\theta^J)$ on $\cal M$ satisfying the requirements of Lemma \eqref{TH-bij-n}. This correspondence is given by
\begin{equation}
(\xi^I_\iota,\theta^J)\mapsto \Big(\theta^0=\iota(\theta^L)\frac{\xi_{\iota I}}{\sqrt{1+\xi_{\iota K}\xi_{\iota}^K}}\theta^I,\theta^J\Big). 
\label{th-bij-M}
\end{equation}
To finish the proof it is enough to set in \eqref{NNth-bth} $(\theta^A)$ expressed in terms of $(\xi_\iota^I,\theta^J)$. The calculations are straightforward except the following ones:
\begin{multline}
-\frac{1}{3!}\veps^0{}_{BCD}*(\theta^B\we\theta^C\we\theta^D)=-\frac{1}{3!}\veps^0{}_{IJK}*(\theta^I\we\theta^J\we\theta^K)=\\=\frac{1}{3!}\veps_{IJK}\veps^{IJK}*(\theta^1\we\theta^2\we\theta^3)=\sgn(\theta^I)|\xi^0_\iota|=\sgn(\theta^I)\sqrt{1+\xi_{\iota K}\xi^K_\iota},
\label{xi-0s}
\end{multline}
where in the third step we used \eqref{*ttt} and in the last one \eqref{xi-0}. Similarly, by virtue of \eqref{th-bij-M}  
\begin{multline}
-\frac{1}{3!}\veps^I{}_{BCD}*(\theta^B\we\theta^C\we\theta^D)=-\frac{1}{2}\veps^I{}_{0JK}*(\theta^0\we\theta^J\we\theta^K)=\\=\frac{1}{2}\veps_0{}^I{}_{JK}\frac{\iota(\theta^N)\xi_{\iota L}}{\sqrt{1+\xi_{\iota M}\xi_\iota^M}}*(\theta^L\we\theta^J\we\theta^K)=\frac{1}{2}\veps^I{}_{JK}\iota(\theta^N)\xi_{\iota L}\veps^{LJK}\sgn(\theta^N)=\\=\sgn(\theta^N)\iota(\theta^N)\xi^I_{\iota}.
\label{xi-Is}
\end{multline}
\end{proof}

\subsubsection{New momenta as functions of the natural variables} 

Expressing the actions of TEGR and YMTM as functionals of the variables $(N,\vec{N},\xi_\iota^I,\theta^J)$ defined on $\cal M$ one can carry out the Legendre transformation and obtain momenta conjugate to the variables. The momenta conjugate to the lapse and the shift are zero because the functionals do not contain the time derivatives of the lapse and the shift---to see this recall that the actions expressed as functionals of $(N,\vec{N},\theta^A)$ do not contain the time derivatives of $N$ and $\vec{N}$ and these two variables do not appear in the relation \eqref{th-bij-M}. Thus again the lapse and the shift can be treated as Lagrange multipliers in the resulting Hamiltonian formulations and in this way one obtains the new variables $(\zeta_{\iota I},r_{ J},\xi^K_\iota,\theta^L)$ on the phase space. 
As stated at the beginning of Section \ref{new-mom} to find the new momenta $(\zeta_{\iota I},r_{ J})$ as functions of the natural variables $(p_A,\theta^A)$ on the phase space we will refer to the Legendre transformations. Let 
\begin{equation*}
L=L_1(\dot{\xi}_\iota^I,\dot{\theta}^J,\xi_\iota^K,\theta^L,N,\vec{N})=L_2(\dot{\theta}^A,\theta^B,N,\vec{N})
\end{equation*}
denote  a differential four-form {on the manifold $\cal M$} being the integrand of either the TEGR action or the YMTM action  expressed as $(i)$ a function $L_1$ of $(N,\vec{N},\xi_\iota^I,\theta^J)$ and their time derivatives (denoted by a dot over the symbol of a variable) and as $(ii)$ a function $L_2$ of $(N,\vec{N},\theta^A)$ and their time derivatives. 

Let $L_{a\perp}:=\partial_t\lr L_a$ ($a=1,2$). Since now we will treat all the variables $N,\vec{N},\xi_\iota^I,\theta^A$ and the time derivatives $\dot{\xi}_\iota^I,\dot{\theta}^A$ as time-dependent fields on $\Sigma$ defined appropriately via pull-backs\footnote{Note that the time derivative $\dot{\theta}^A$ can be restored from the family $\{\varphi^*_t\dot{\theta}^A\}$---indeed, the time derivative of $\theta^A$ being the Lie derivative of the one-form with respect to $\partial_t$ reads $\dot{\theta}^A=\partial_t\lr\bed \theta^A+\bed(\partial_t\lr\theta^A)=\partial_t\lr\bed\theta^A$, hence $\partial_t\lr\dot{\theta}^A=0$.} $\{\varphi^*_t\}$ and push-forwards $\{\varphi^{-1}_{t*}\}$. Then the momenta $(\zeta_{\iota I},r_{ J})$ conjugate to, respectively, $(\xi_\iota^I,\theta^J)$ and the momenta $(p_A)$ conjugate to $\theta^A$ can be defined as \cite{ham-diff,mielke,os}
\begin{align*}
&\zeta_{\iota I}:=\frac{\partial L_{1\perp}}{\partial \dot{\xi}_\iota^I},&&r_{ I}:=\frac{\partial L_{1\perp}}{\partial \dot{\theta}^J},&&p_A:=\frac{\partial L_{2\perp}}{\partial \dot{\theta}^A}.
\end{align*}
The partial derivative of a three-form with respect to an $l$-form is an $(3-l)$-form (for a definition of the partial derivative see e.g. \cite{os}). This means that $\zeta_{\iota I}$ is a three-form and $r_{ I}$ a two-form as stated already at the beginning of Section \ref{new-mom}.   

It turns out that to derive the desired relations it is more convenient to use standard description of the Legendre transformations and the momenta in terms of components of tensor densities expressed in a (local) coordinate frame $(y^i)$, $i=1,2,3$, on $\Sigma$. Let $\tilde{\eps}^{ijk}$ be a Levi-Civita density of weight $1$ on $\Sigma$. It allows to transform the three-forms $L_{1\perp}$ and $L_{2\perp}$ into scalar densities \cite{os}:
\begin{align*}
&\tilde{L}_1(\dot{\xi}_\iota^I,\dot{\theta}^J,\xi_\iota^K,\theta^L,N,\vec{N}):=\frac{1}{3!}(L_{1\perp})_{ ijk}\tilde{\eps}^{ijk},&&\tilde{L}_2(\dot{\theta}^A,\theta^B,N,\vec{N}):=\frac{1}{3!}(L_{2\perp})_{ ijk}\tilde{\eps}^{ijk}.
\end{align*}
Then
\begin{align*}
\tilde{\zeta}_{\iota I}:=\frac{\partial \tilde{L}_{1}}{\partial \dot{\xi}_\iota^I},&&\tilde{r}^{i}_{\iota I}:=\frac{\partial \tilde{L}_{1}}{\partial \dot{\theta}^J_i},&&\tilde{p}^i_A:=\frac{\partial \tilde{L}_{2}}{\partial \dot{\theta}^A_i},
\end{align*}
are tensor densities related to the momenta $\zeta_{\iota I},r_{ J},p_A$ as follows \cite{os}
\begin{equation}
\begin{aligned}
\zeta_{\iota I}&=\frac{1}{3!}\tilde{\zeta}_{\iota I}\,^{\text{\tiny{--1}}}\!\tl{\eps}_{ijk}\,dy^{i}\we dy^j\we dy^{k},&r_{ J}&=\frac{1}{2!}\tilde{r}^{i}_{\iota I}\,^{\text{\tiny{--1}}}\!\tl{\eps}_{ijk}\,dy^{j}\we dy^{k},\\p_A&=\frac{1}{2!}\tilde{p}^{i}_A\,^{\text{\tiny{--1}}}\!\tl{\eps}_{ijk}\,dy^{j}\we dy^{k},&&
\end{aligned}
\label{tm-m}
\end{equation}
where $\,^{\text{\tiny{--1}}}\!\tl{\eps}_{ijk}$ is the Levi-Civita density of weight $-1$ on $\Sigma$.   

Now let us find a relation between ``velocities'' $\dot{\theta}^0$ and $\dot{\xi}_\iota^I,\dot{\theta}^I$. By virtue of \eqref{t0-atI}
\begin{equation}
\dot{\theta}^0=\dot{\alpha}_I\theta^I+\alpha_I\dot{\theta}^I=\frac{\partial\alpha_I}{\partial\xi_\iota^J}\dot{\xi}_\iota^J\theta^I+\alpha_I\dot{\theta}^I=\frac{1}{\xi_\iota^0}(q_{IJ}\dot{\xi}_\iota^J\theta^I+\xi_{\iota I}\dot{\theta}^I),
\label{dot-t}
\end{equation}
where in the last step we applied \eqref{al/xi} and \eqref{xi-al}. 

This result together with the equation
\[
\tilde{L}_1(\dot{\xi_\iota^I},\dot{\theta}^J,\xi_\iota^I,\theta^J)=\tilde{L}_2\Big(\dot{\theta}^0(\dot{\xi_\iota^I},\dot{\theta}^J,\xi_\iota^I,\theta^J),\dot{\theta}^I,\theta^0(\xi_\iota^I,\theta^J),\theta^I\Big).
\]  
allows us express the momenta $\tilde{\zeta}_{\iota I}$ and $\tilde{r}^i_{\iota I}$ as functions of the variables $(\tilde{p}^i_{A},\theta^B_i)$. Thus
\begin{equation}
\tilde{\zeta}_{\iota I}=\frac{\partial\tilde{L}_2}{\partial\dot{\theta}^0_i}\frac{\partial\dot{\theta}^0_i}{\partial\dot{\xi_\iota^I}}=\frac{q_{IJ}}{\xi_\iota^0}\tilde{p}^i_{0}\,\theta^J_i.
\label{til-zeta}
\end{equation}
On the other hand 
\begin{equation}
\tilde{r}^i_{\iota I}=\frac{\partial\tilde{L}_2}{\partial\dot{\theta}^0_j}\frac{\partial \dot{\theta}^0_j}{\partial\dot{\theta}^I_i}+\frac{\partial\tilde{L}_2}{\partial\dot{\theta}^I_i}=\tilde{p}^i_0\frac{\xi_{\iota I}}{\xi_\iota^0}+\tilde{p}^i_I.
\label{til-r}
\end{equation}
Setting \eqref{til-zeta} to the first equation in \eqref{tm-m}  we obtain
\begin{multline*}
\zeta_{\iota I}=\frac{1}{3!}\tilde{\zeta}_{\iota I}\,^{\text{\tiny{--1}}}\!\tl{\eps}_{kln}\,dy^{k}\we dy^l\we dy^{n}=\frac{1}{3!}\frac{q_{IJ}}{\xi_\iota^0}\tilde{p}^j_{0}\,\theta^J_i\delta^i{}_j\,^{\text{\tiny{--1}}}\!\tl{\eps}_{kln}\,dy^{k}\we dy^l\we dy^{n}=\\=\frac{1}{3!}\frac{q_{IJ}}{\xi_\iota^0}\tilde{p}^j_{0}\,\theta^J_i\frac{1}{2}\,^{\text{\tiny{--1}}}\!\tl{\eps}_{jab}\tilde{\eps}^{iab}\,^{\text{\tiny{--1}}}\!\tl{\eps}_{kln}\,dy^{k}\we dy^l\we dy^{n}.
\end{multline*}
Since \cite{os}
\[
\tilde{\eps}^{iab}\,^{\text{\tiny{--1}}}\!\tl{\eps}_{kln}=3!\delta^{i}{}_{[k}\delta^a{}_l\delta^{b}{}_{n]}
\]
we can write
\begin{multline}
\zeta_{\iota I}=\frac{1}{3!}\frac{q_{IJ}}{\xi_\iota^0}\tilde{p}^j_{0}\,\theta^J_i\frac{1}{2}\,^{\text{\tiny{--1}}}\!\tl{\eps}_{jab}\,3!\delta^i{}_k\delta^a{}_l\delta^b{}_n\,dy^{k}\we dy^l\we dy^{n}=\\=\frac{q_{IJ}}{\xi_\iota^0}\,\theta^J_i\,dy^{i}\we \Big(\frac{1}{2}\tilde{p}^j_{0} \,^{\text{\tiny{--1}}}\!\tl{\eps}_{jab} dy^a\we dy^{b}\Big)=
\frac{q_{IJ}}{\xi_\iota^0}\,\theta^J\we p_0,
\label{zeta-f}
\end{multline}
where in the last step we applied the third equation in \eqref{tm-m}. Similarly, setting \eqref{til-r} to the second equation in \eqref{tm-m} and using the third equation in \eqref{tm-m} we get
\begin{equation}
r_I=\frac{\xi_{\iota I}}{\xi_\iota^0}p_0+p_I.
\label{r-f}
\end{equation} 

Now, we are going to inverse the formulae \eqref{zeta-f} and \eqref{r-f}. {Denoting by $(p_{0MN})$ the components of $p_0$ given by the coframe $(\theta^I)$ we transform the former formula as follows:} 
\begin{multline*}
*\zeta_{\iota I}=\frac{q_{IJ}}{\xi_\iota^0}*(\theta^J\we p_0)=\frac{q_{IJ}}{\xi_\iota^0}\frac{1}{2}p_{0MN}*(\theta^J\we \theta^M\we\theta^N)=\\=\frac{q_{IJ}}{\xi_\iota^0}\frac{1}{2}p_{0MN}\veps^{JMN}*(\theta^1\we \theta^2\we\theta^3)=\frac{q_{IJ}}{\xi_\iota^0}\frac{1}{2}p_{0MN}\veps^{JMN}\sgn(\theta^K)|\xi_\iota^0|=\\=q_{IJ}\frac{\sgn(\theta^K)}{\iota(\theta^K)}\frac{1}{2}p_{0MN}\veps^{JMN}.
\end{multline*}
Thus
\[
\frac{\iota(\theta^K)}{\sgn(\theta^K)}\bar{q}^{JI}*\zeta_{\iota I}=\frac{1}{2}p_{0MN}\veps^{JMN},
\]
where $(\bar{q}^{IJ})$ is the inverse of $(q_{IJ})$ given by \eqref{bar-qIJ}. Contracting both sides of this equation with the permutation symbol $\veps_{JM'N'}$ we obtain  
\[
\frac{\iota(\theta^K)}{\sgn(\theta^K)}\bar{q}^{JI}*\zeta_{\iota I}\veps_{JM'N'}=\frac{1}{2}p_{0MN}\veps^{JMN}\veps_{JM'N'}=p_{0M'N'}.
\]
Consequently,
\[
p_0=\frac{1}{2}p_{0MN}\theta^M\we\theta^N=\frac{1}{2}\frac{\iota(\theta^K)}{\sgn(\theta^K)}\bar{q}^{JI}(*\zeta_{\iota I})\veps_{JMN}\theta^M\we\theta^N.
\]
Let us simplify the result: acting on both sides of this equation by the Hodge operator $*$ and using \eqref{*tt} we obtain 
\begin{multline*}
*p_0=\frac{1}{2}\frac{\iota(\theta^K)}{\sgn(\theta^K)}\bar{q}^{JI}(*\zeta_{\iota I})\veps_{JMN}*(\theta^M\we\theta^N)=\\=\frac{1}{2}\iota(\theta^{K'})\bar{q}^{JI}(*\zeta_{\iota I})\,\veps_{JMN}\,|\xi^0_\iota|\,q_{LK}\veps^{MNL}\theta^K=\\=\xi_\iota^0(*\zeta_{\iota I})\bar{q}^{JI}\delta^L{}_Jq_{LK}\theta^K=\xi_\iota^0(*\zeta_{\iota I})\theta^I.
\end{multline*}
Thus
\begin{equation}
p_0=**p_0=\xi_\iota^0*(*\zeta_{\iota I}\we\theta^I)=\xi_\iota^0\,\vec{\theta}^I\lr\zeta_{\iota I},
\label{p0-tze}
\end{equation}
---here in the last step we used an identity \cite{os}:
\begin{equation}
*(*\beta\we\alpha)=\vec{\alpha}\lr\beta,
\label{a-b}
\end{equation}
valid for any $k$-form $\beta$ and any one-form $\alpha$ on $\Sigma$, where $\vec{\alpha}$ denotes a vector field obtained from the one-form $\alpha$ by raising its index by the metric inverse to $q$: in a coordinate frame on $\Sigma$ 
\[
\vec{\alpha}:=q^{ij}\alpha_i\partial_j.
\]   

Setting \eqref{p0-tze} to \eqref{r-f} we obtain
\begin{equation}
p_I=r_{ I}-\xi_{\iota I}\,\vec{\theta}^J\lr\zeta_{\iota J}.
\label{pI-rtze}
\end{equation}

\subsubsection{Summary of the transformations}

Gathering \eqref{p0-tze}, \eqref{pI-rtze} and \eqref{th-bij-eq}  and using \eqref{xi-0} we obtain the following formulae describing the dependence of $(p_A,\theta^B)$ on $(\zeta_{\iota I},r_{ J},\xi_\iota^K,\theta^L)$:
\begin{equation}
\begin{aligned}
p_0&=\iota(\theta^K)\sqrt{1+\xi_{\iota J}\xi^J_\iota}\,\vec{\theta}^I\lr\zeta_{\iota I},& p_I&=r_I-\xi_{\iota I}\,\vec{\theta}^J\lr\zeta_{\iota J},\\
\theta^0&=\iota(\theta^J)\frac{\xi_{\iota I}}{\sqrt{1+\xi_{\iota K}\xi_\iota^K}}\,\theta^I, &  \theta^I&=\theta^I
\end{aligned} 
\label{old-new}
\end{equation}
---here the metric inverse to $q$ used to define the vector field $\vec{\theta}^I$ should be treated as a function of $\xi_\iota^I$ and $\theta^J$ (see Equations \eqref{q'} and \eqref{q-xi}).

Gathering \eqref{zeta-f}, \eqref{r-f} and \eqref{xi-t} and using \eqref{xi0-detq} we obtain the following formulae expressing the dependence of $(\zeta_{\iota I},r_{ }J,\xi_\iota^K,\theta^L)$ on $(p_A,\theta^B)$:
\begin{equation}
\begin{aligned}
\zeta_{\iota I}&=\iota(\theta^K)\sqrt{\det (q_{MN})}\,q_{IJ}\,\theta^J\we p_0, \\
r_{ I}&= \frac{\sqrt{\det (q_{MN})}}{2}\sgn(\theta^L)*(\theta^0\we\theta^J\we\theta^K)\,\veps_{JKI}\,p_0+p_I,\\
\xi^I_{\iota}&=\frac{1}{2}\frac{\iota(\theta^L)}{\sgn(\theta^L)}*(\theta^0\we\theta_J\we\theta_K)\,\veps^{JKI},\\
\theta^I&=\theta^I.
\end{aligned}
\label{new-old}
\end{equation}
In these formulae the Hodge operator $*$ is defined by the metric $q$ treated as a function \eqref{q} of $(\theta^A)$. Note that it follows immediately from the result just obtained that $r_I$ does not depend on the function $\iota$. 

Let $\iota_1$ and $\iota_2$ be admissible functions on $\Theta$. Then $\iota:=\iota_1/\iota_2$ is admissible as well and 
\begin{align*}
\zeta_{\iota_1 I}&=\iota \zeta_{\iota_2 I},&\xi^I_{\iota_1}&=\iota\xi^I_{\iota_2}.
\end{align*}

Analyzing the formulae \eqref{new-old} we will find now the range of the new momenta $\zeta_{\iota I}$ and $r_J$ (the range of $(\xi^K_\iota,\theta^L)$ is described by Lemma \ref{th-bij}). Since $p_I$ can be any two-form on $\Sigma$ the momentum $r_I$ can also be any two-form on the manifold. Consider now a three-form $(\bar{q}^{JI}\zeta_{\iota I})$ which according to the first equation in \eqref{new-old} is of the following form     
\begin{equation}
\bar{q}^{JI}\zeta_{\iota I}=\theta^J\we p,
\label{q-zeta}
\end{equation}
where $p$ can be any two-form on $\Sigma$. On the other hand any three-form $\alpha$ on $\Sigma$ can be expressed as
\[
\alpha=\frac{1}{3!}\alpha_{IJK}\theta^I\we\theta^J\we\theta^K=\alpha_{123}\theta^1\we\theta^2\we\theta^3.
\]  
It means that there are no restriction imposed on the form $\bar{q}^{JI}\zeta_{\iota I}$---indeed, if  e.g. $J=1$ then setting in \eqref{q-zeta} $p=\alpha_{123}\theta^2\we\theta^3$ we see that $\bar{q}^{1I}\zeta_{\iota I}=\alpha$. Since there are no restrictions imposed on $\bar{q}^{JI}\zeta_{\iota I}$ there are no restrictions  imposed on $\zeta_{\iota I}$.      
  
Thus we obtain a new description of the phase space $P\times\Theta$ alternative to that presented in Section \ref{ph-sp}: given admissible function $\iota$, 
\begin{enumerate}
\item $\Theta$ consists of all sextuplets $(\xi_\iota^I,\theta^J)$, $(I,J=1,2,3)$, such that $\xi_\iota^I$ is a real function on $\Sigma$ and the one-forms $(\theta^J)$ form a global coframe on the manifold (see Lemma \ref{th-bij})      
\item $P$ consists of all sextuplets $(\zeta_{\iota I},r_J)$, $(I,J=1,2,3)$, where $\zeta_{\iota I}$ is a three-from and $r_J$ a two-form on $\Sigma$.     
\end{enumerate}              
The Poisson bracket on the phase space in terms of the new variables reads:
\[
\{F,G\}=\int_\Sigma\Big(\frac{\delta F}{\delta {\xi}_\iota^I}\we\frac{\delta G}{\delta \zeta_{\iota I}}+\frac{\delta F}{\delta {\theta}^I}\we\frac{\delta G}{\delta r_I}-\frac{\delta G}{\delta {\xi}_\iota^I}\we\frac{\delta F}{\delta \zeta_{\iota I}}-\frac{\delta G}{\delta {\theta}^I}\we\frac{\delta F}{\delta r_I}\Big).
\]

\subsection{Constraints of TEGR and YMTM}

\subsubsection{The constraints as functions of the natural variables \label{constr-old}}

Let us first express the (smeared) constraints (and the Hamiltonians) of TEGR and YMTM in terms of the natural variables $(p_A,\theta^B)$ and the function $(\xi^A)$ given by \eqref{xi-*} (being thereby a function of $(\theta^A)$). 

In \cite{oko-tegr} we derived a complete set of constraints of TEGR consisting of a scalar constraint
\begin{multline}
S(M):=\int_\Sigma M\Big(\frac{1}{2}(p_A\we\theta^B)\we*(p_B\we\theta^A)-\frac{1}{4}(p_A\we\theta^A)\we*(p_B\we\theta^B)-\xi^A{d}p_A+\\+\frac{1}{2}(d\theta_A\we\theta^B)\we{*}(d\theta_B\we\theta^A)-\frac{1}{4}(d\theta_A\we\theta^A)\we{*}(d\theta_B\we\theta^B)\Big)
\label{S(N)}
\end{multline}
and, respectively, smeared vector, boost and rotation constraints (the last two constraints generate local Lorentz transformations on the phase space):
\begin{align}
V({\vec{M}}):=&\int_\Sigma -{d}{\theta}^A\we(\vec{M}\lrcorner p_A)-(\vec{M}\lrcorner{\theta}^A)\we {d}p_A, \label{V(N)}\\
B(a):=&\int_\Sigma a\we(\theta^A\we*d\theta_A+\xi^Ap_A),\label{B(a)}\\
R(b):=&\int_\Sigma b\we(\theta^A\we*p_A-\xi^Ad\theta_A),\label{R(b)}.
\end{align}
In the formulae above there appear the following smearing fields on $\Sigma$: $M$ is a functions, $\vec{M}$ is a vector field and $a$ and $b$ are one-forms. All the constraints are of the first class.  

The Hamiltonian of TEGR turns out to be a sum of the constraints:   
\begin{equation}
H[{\theta}^A,p_B,N,\vec{N},a,b]=S(N)+V(\vec{N})+B(a)+R(b),
\label{full-ham-c}
\end{equation}
where $N$ is the lapse function, $\vec{N}$ is the shift vector field---here the fields $N,\vec{N},a$ and $b$ Lagrangian multipliers. 

The only constraints of YMTM \cite{os} are a scalar constraint
\begin{equation}
s(M):=\int_\Sigma M\Big(\frac{1}{2}p^A\we{*}p_A-\xi^Adp_A+\frac{1}{2}{d}{\theta}^A\we{*}{d}{\theta}_A\Big)
\label{s(N)}
\end{equation}
and the vector constraint $v(\vec{M})\equiv V(\vec{M})$. The constraints are of the first class.

The Hamiltonian of YMTM is of the following form 
\begin{equation}
h[{\theta}^A,p_A,N,\vec{N}]=s(N)+v(\vec{N})
\label{ham-simpl}
\end{equation}

\subsubsection{The constraints as functions of new variables \label{constr-new}}

To rewrite the constraints in terms of new variables $(\zeta_{\iota I},r_J,\xi^K_\iota,\theta^L)$ it is enough to set in the formulae presented above $(p_A,\theta^B)$ and $(\xi^A)$ expressed as functions of the new variables, that is, \eqref{old-new}, \eqref{xi-0s} and \eqref{xi-Is} (recall that $(\xi^A)$ appearing in the constraints is defined by \eqref{xi-*}). For the sake of simplicity wherever possible we will use the function $\xi^0_\iota$ defined by \eqref{xi-0}.     

Calculations needed to transform the constraints of TEGR and YMTM to the desired form will be carried out in Appendix \ref{constr-der}, here we only present the results. 

The scalar constraints of TEGR reads
\begin{multline}
S(M)=\int_\Sigma M\Big(\frac{1}{2}(r_I\we\theta^J)\we*(r_J\we\theta^I)-\frac{1}{4}(r_I\we\theta^I)\we*(r_J\we\theta^J)-\\-\frac{\sgn(\theta^L)}{\iota(\theta^L)}\big(d(\vec{\theta}^J\lr\zeta_{\iota J})+\xi_\iota^I\we dr_I\big)+\\+\frac{1}{4(\xi_\iota^0)^4}d(\xi_{\iota I}\theta^I)\we\xi_{\iota J}\theta^J\we*(d(\xi_{\iota K}\theta^K)\we\xi_{\iota L}\theta^L)+\frac{1}{2(\xi_\iota^0)^2}d(\xi_{\iota I}\theta^I)\we\xi_{\iota J}\theta^J\we*(d\theta_K\we\theta^K)-\\-\frac{1}{(\xi_\iota^0)^{2}}(q_{IJ}d\xi_\iota^J\we\theta^I+\xi_{\iota I}d\theta^I)\we\theta^K\we*(d\theta_K\we\xi_{\iota L}\theta^L)+\\+\frac{1}{2}d\theta_I\we\theta^J\we*(d\theta_J\we\theta^I)-\frac{1}{4}d\theta_I\we\theta^I\we*(d\theta_J\we\theta^J)\Big).
\label{scal-c}
\end{multline}
The vector constraint
\begin{equation}
V(\vec{M})=\int_\Sigma d\xi_\iota^I\we\vec{M}\lr\zeta_{\iota I} -{d}{\theta}^I\we(\vec{M}\lrcorner\, r_I)-(\vec{M}\lrcorner{\theta}^I)\we {d}r_I.
\label{v(n)-new}
\end{equation}
The boost constraint
\begin{multline}
B(a)=\\=\int_\Sigma a\we\Big(-\frac{\xi_{\iota I}q_{JK}}{(\xi_\iota^0)^2}\theta^I\we*(d\xi_\iota^J\we\theta^K)+q_{IJ}\theta^I\we*d\theta^J+\frac{\sgn(\theta^L)}{\iota(\theta^L)}\big(\vec{\theta}^I\lr\zeta_{\iota I}+\xi_\iota^I r_I\big)\Big).
\label{boost-c}
\end{multline}
The rotation constraint
\begin{equation}
R(b)=\int_\Sigma b\we(\theta^I\we*r_I+\frac{\sgn(\theta^L)}{\iota(\theta^L)}q_{IJ}d\xi_\iota^I\we\theta^J).
\label{rot-c}
\end{equation}

The scalar constraints of YMTM reads:
\begin{multline}
s(M)=\int_\Sigma \frac{M}{2}\Big(-\bar{q}^{IJ}\zeta_{\iota I}*\zeta_{\iota J}+r^I\we*r_I-2\xi_\iota^I*\zeta_{\iota K}\, r_I\we{\theta}^K +\\-2\frac{\sgn(\theta^L)}{\iota(\theta^L)}\big(d(\vec{\theta}^J\lr\zeta_{\iota J})+\xi_\iota^I\we dr_I\big)-\\-\frac{q_{IJ}q_{KL}}{(\xi^0_\iota)^2}d\xi_\iota^I\we\theta^J\we*(d\xi_\iota^K\we\theta^L)-\frac{2q_{IJ}}{(\xi^0_\iota)^2}d\xi_\iota^I\we\theta^J\we*(\xi_{\iota K}d\theta^K)+q_{IJ}d\theta^I\we*d\theta^J\Big).
\label{scal-c-ymtm}
\end{multline}
The vector constraint $v(\vec{M})$ of YMTM coincides with that of TEGR. 

\subsection{An obstacle for defining quantum constraints}

\subsubsection{Outline of the construction of a space of quantum states \label{outline}}

To check for which new variables $(\zeta_{\iota I},r_J,\xi^K_\iota,\theta^L)$ there appears an obstacle for defining quantum constraints let us first outline the projective methods \cite{q-nonl} by means of which we would like to construct a space of kinematic quantum states for TEGR. 

The methods require to choose some functions on the Hamiltonian configuration space $\Theta$ as well as some functions on the momentum space $P$---the former functions are called {\em configurational elementary degrees of freedom}, while the latter ones {\em momentum elementary d.o.f.}. All the elementary d.o.f. should separate points in the phase space. Moreover, it should be possible to construct from the d.o.f. a directed set $(\Lambda,\geq)$ such that each element $\lambda$ of this set corresponds to a finite number of both momentum and configurational d.o.f.. It was shown in \cite{q-nonl} that if the set $(\Lambda,\geq)$ satisfies some assumptions then it naturally generates a space $\D$ of kinematic quantum states. The space $\D$ is generated in the following way.

Given $\lambda\in\Lambda$ which corresponds to a finite set $K=\{\kappa_1,\ldots,\kappa_N\}$ of configurational d.o.f. (and to a finite set of momentum ones), one defines so called {\em reduced configuration space} $\Theta_K$:
\[
\Theta_K:=\Theta/\sim_K,
\]         
where $\sim_K$ is an equivalence relation on $\Theta$---we say that $\theta,\theta'\in\Theta$ are equivalent if $\kappa_\alpha(\theta)=\kappa_\alpha(\theta')$ for every $\kappa_\alpha\in K$. One of the assumptions imposed on $(\Lambda,\geq)$ requires that there exists a natural bijection from $\Theta_K$ onto $\R^N$, where $N$ is the number of elements of $K$. This allows to define a Hilbert space
\[
\h_\lambda:=L^2(\Theta_K,dx),
\]     
where $dx$ is a measure on $\Theta_K$ induced by the Lebesgue measure on $\R^N$ via the natural bijection. Then among all bounded operators on $\h_\lambda$ one distinguishes the space $\D_\lambda$ of all density operators on $\h_\lambda$ (i.e. positive operators on the Hilbert space of trace equal $1$). In this way one obtains a family $\{\D_\lambda\}_{\lambda\in\Lambda}$. It follows from the assumptions the set $(\Lambda,\geq)$ is supposed to satisfy that this family is naturally equipped with the structure of a projective family. Then the space $\D$ of quantum states is defined as the projective limit of the family. 

\subsubsection{How to apply the construction to TEGR?}
        
Let us now explain how we are going to apply this general construction to TEGR. Let $y$ be a point of $\Sigma$, $e$ an edge\footnote{A {\em simple edge} is a one-dimensional connected $C^\infty$ submanifold of $\Sigma$ with two-point boundary. An edge is an {\em oriented} one-dimensional connected $C^0$ submanifold of $\Sigma$ given by a finite union of simple edges.} in the manifold and $\iota$ an admissible function. Consider the following functions on $\Theta$ \cite{q-suit}:
\begin{equation}
\begin{aligned}
\Theta\ni\theta\mapsto\kappa^I_y(\theta)&:=\xi^I_\iota(y)\in\R,\\
\Theta\ni\theta\mapsto\kappa^J_e(\theta)&:=\int_e\theta^J\in\R.
\end{aligned}
\label{con-dof}
\end{equation}
It was shown in \cite{q-suit} that all such functions (where $I,J=1,2,3$, $y$ runs through $\Sigma$ and $e$ through all edges in the manifold) are very promising as configurational elementary d.o.f. for constructing a set $(\Lambda,\geq)$ and  thereby a space $\D$ of quantum states for TEGR. More precisely, we argued there that to construct the directed set one should use finite sets of configurational d.o.f. of the following form
\begin{equation}
{K}_{u,\gamma}:=\{ \ {\kappa}^I_{y_1},\ldots,{\kappa}^I_{y_M},\kappa^J_{e_1},\ldots,\kappa^J_{e_N} \ | \ I,J=1,2,3 \ \}.
\label{K-ug1}
\end{equation}
where $u=\{y_1,\ldots,y_M\}$ is a finite subset of $\Sigma$ and $\gamma=\{e_1,\ldots,e_N\}$ is a graph\footnote{We say that two edges are {\em independent} if the set of their common points is either empty or consist of one or two endpoints of the edges. A {\em graph} in $\Sigma$ is a finite set of pairwise independent edges.} in the manifold. As proven in \cite{q-suit} there exists a natural bijection from the  reduced configuration space $\Theta_{K_{u,\gamma}}$ onto  $\R^{3(M+N)}$. Thus one can try to build the space $\D$ for TEGR from the spaces $\{\D_\lambda\}_{\lambda\in\Lambda}$ of density operators each associated with a Hilbert space of square-integrable functions on some $\Theta_{K_{u,\gamma}}$.        

\subsubsection{The obstacle}

Suppose that, given new variables $(\zeta_{\iota I},r_J,\xi^K_\iota,\theta^L)$, there exists a space $\D$ of quantum states for TEGR constructed as explained above\footnote{In \cite{q-stat} we will show that actually this supposition {\em is true}.}.  Given constraint $C$ on the phase space, we may to try to define its quantum counterpart $\hat{C}$ on $\D$ as a family $\{\hat{C}_\lambda\}_{\lambda\in\Lambda}$ of operators such that each $\hat{C}_\lambda$ is a quantum constraint on $\h_\lambda$ \cite{q-nonl}---taking into account the complexity of the whole space $\D$ it would be rather very difficult or perhaps impossible to define quantum constraints via an essentially different method. 

The question now is: are we able to define operators $\{\hat{C}_\lambda\}$ for the constraints of TEGR (or YMTM)? Assume that $\iota(\theta^J)\neq\sgn(\theta^J)$ and $\iota(\theta^J)\neq-\sgn(\theta^J)$. Then all the constraints of TEGR and YMTM except the vector one depend on $\sgn(\theta^J)$. This means that to define an operator $\hat{C}_\lambda$ corresponding to a constraint under consideration  we would have to represent the function $\sgn(\theta^I)$ as an operator on $\h_\lambda$. This however seems to be impossible.    

Indeed, there holds the following lemma \cite{q-suit}:
\begin{lm}
Let $\gamma=\{e_1,\ldots,e_N\}$ be a graph. Then for every $(x^I_{\bar{J}})\in\R^{3N}$ there exists a global coframe $(\theta^I)$ on $\Sigma$ compatible (incompatible) with the orientation of the manifold such that  
\[
\int_{e_{\bar{J}}}\theta^I=x^I_{\bar{J}}
\]      
for every $I=1,2,3$ and $\bar{J}=1,2,\ldots,N$. 
\label{theta-x3}
\end{lm}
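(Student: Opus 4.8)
The plan is to reduce the global statement to $N$ independent local problems, one per edge, and to solve each one by prescribing the coframe freely along the core of a thin tube about the edge.

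\emph{Reductions.} Since $\Sigma$ is a closed orientable three-manifold it is parallelizable, so global coframes exist; equivalently, $\Theta_+\ne\emptyset$ by Lemma \ref{th-bij}. Fix once and for all a reference coframe $(\omega^I)$ compatible with the orientation. It suffices to treat the compatible case: given an incompatible target, I would solve the compatible problem with the first two target columns $(x^1_{\bar J},x^2_{\bar J})$ interchanged and then swap the two resulting one-forms, which reverses $\sgn$ while restoring the prescribed integrals.

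\emph{Localization.} Because the edges of a graph are pairwise independent, the interior of each $e_{\bar J}$ is disjoint from every other edge and from all endpoints; hence I may choose pairwise disjoint closed tubes $T_{\bar J}\cong[0,1]\times D^2$ with $T_{\bar J}$ meeting $\gamma$ only in a subarc of $e_{\bar J}$, realized as the core $[0,1]\times\{0\}$. I will modify $(\omega^I)$ only inside the $T_{\bar J}$ and keep it equal to $(\omega^I)$ near each $\partial T_{\bar J}$; modifications in distinct tubes do not interact and do not change $\int_{e_{\bar K}}$ for $\bar K\ne\bar J$, so the $3N$ conditions decouple. In each tube it then remains to build a positively oriented coframe that equals $(\omega^I)$ near $\partial T_{\bar J}$ and whose core integrals $\int_0^1\theta^I_t(t,0)\,dt$ equal prescribed numbers $c^I$ (absorbing into $c^I$ the fixed contribution of $\omega^I$ to $\int_{e_{\bar J}}$ lying outside the tube).

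\emph{Local construction.} Writing the coframe along the core as the matrix $M(t)=(\theta^I_\mu(t,0))\in GL^+(3,\R)$, the only datum entering the integral is its ``$t$-column'' $u(t):=(\theta^I_t(t,0))_I\in\R^3$, and the only pointwise condition $u$ must meet in order to be completable to a positively oriented frame is $u(t)\ne 0$. So I first choose a smooth path $u:[0,1]\to\R^3\setminus\{0\}$ equal to $(\omega^I_t)$ near the endpoints with $\int_0^1 u\,dt=(c^I)$: since the complement of the origin has codimension three, a generic such path (for instance $u=\omega_t+\rho_0 w+\rho_1 w^{\perp}$ with $\int\rho_0=1$, $\int\rho_1=0$ and $w=(c^I)-\int\omega_t\,dt$) avoids $0$ while hitting the prescribed integral. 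I then complete $u(t)$ to a smooth frame $M(t)\in GL^+(3,\R)$ agreeing with $M_\omega$ near $t=0,1$, and extend $M$ from the core to the whole solid tube so that it equals $(\omega^I)$ near $\partial T_{\bar J}$. Carrying this out in every tube yields the desired global coframe.

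The step I expect to be the crux is this final extension, from the one-dimensional core to the three-dimensional tube while matching the fixed reference on the boundary. Since the solid tube is contractible, a trivialization reduces it to extending a $GL^+(3,\R)$-valued map prescribed on $\partial T_{\bar J}\cup(\text{core})$; connectivity of $GL^+(3,\R)\simeq SO(3)$ together with $\pi_2(SO(3))=0$ guarantees that the boundary datum extends, and the residual $\pi_1(SO(3))=\Z_2$ ambiguity along the core can be absorbed using the nonuniqueness of the completion $M(t)$ (inserting, if necessary, a contractible full rotation). I would check throughout that $\theta^1\we\theta^2\we\theta^3$ stays nonvanishing with positive orientation, which is exactly the statement that $M$ remains in $GL^+$. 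It is worth noting that the ``half-space'' obstruction one meets by modifying only the $dt$-components (fixing the transverse frame forces $\int u$ into a half-space, blocking some $c^I$) is precisely what is circumvented by letting the \emph{full} frame vary along the core.
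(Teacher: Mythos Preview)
The paper does not prove this lemma; it is quoted from the companion paper \cite{q-suit} and used here as a black box, so there is no in-paper argument to compare your proposal against.

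On its own merits your strategy is correct and is the natural one: localize to disjoint tubes about smooth interior subarcs of the edges (possible because edges of a graph meet only at endpoints), prescribe the $t$-column $u(t)\in\R^3\setminus\{0\}$ of the frame along each core so as to hit the required integral, complete $u$ to a $GL^+(3,\R)$-valued loop $M_{\rm core}$, and extend over the solid tube matching the reference on $\partial T$. After trivializing so that the reference frame is the identity on the tube, the sole obstruction to the extension is $[M_{\rm core}]\in\pi_1(GL^+(3,\R))\cong\Z_2$, and your remark that it can be toggled by composing with a full rotation in the $SO(2)$-fiber of $SO(3)\to S^2$ is exactly right, since $\pi_1(SO(2))\to\pi_1(SO(3))$ is surjective. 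Two small points worth tightening: (i) choose the tube chart $(t,y,z)$ positively oriented, so that $M\in GL^+$ really encodes compatibility with the orientation of $\Sigma$; (ii) for the existence of $u$ avoiding $0$ with prescribed integral, a transversality/genericity argument (the bad parameter locus is one-dimensional in $\R^3$) is cleaner than the explicit formula you wrote, which need not avoid $0$ for every choice of $(w,\rho_0,\rho_1)$. The citation of Lemma~\ref{th-bij} for $\Theta_+\neq\emptyset$ is slightly off; Stiefel's theorem on parallelizability of orientable $3$-manifolds is what you actually invoke.
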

\noindent By virtue of the lemma for every $\theta\equiv(\xi_\iota^I,\theta^J)\in\Theta$ the equivalence class $[\theta]\in \Theta_{K_{u,\gamma}}$ defined by the relation $\sim_{K_{u,\gamma}}$ contains points of $\Theta$ given by global coframes on $\Sigma$ compatible and global coframes incompatible with the orientation of the manifold. Therefore no function on $\Theta_{K_{u,\gamma}}$ can even approximate the function $\sgn(\theta^I)$. 

Of course, all d.o.f. \eqref{con-dof} (or even countably infinite subset of all these d.o.f.) contain enough information to obtain the value of $\sgn(\theta^I)$ for any global coframe $(\theta^I)$. However, this fact is rather not very helpful since it means that we would have to define some quantum constraints directly on $\D$ or on sectors of the space such that each sector is given by an infinite number of quantum d.o.f. corresponding to the classical ones and, of course, this task seems to be very hard.     

The conclusion is that, unless $\iota(\theta^J)=\sgn(\theta^J)$ or $\iota(\theta^J)=-\sgn(\theta^J)$, for most of the constraints we cannot find quantum counterparts via the method described above which seems to be only workable one. In other words, even if for an admissible $\iota(\theta^J)\neq\pm\sgn(\theta^J)$ one could construct the space $\D$ then it would not be useful in the sense described in the introduction to the paper.    

Obviously, the two admissible functions $\sgn(\theta^J)$ and $-\sgn(\theta^J)$ are distinguished because for the function $(\xi^A)$ given by \eqref{xi-*} appearing in the original form of the constraints 
\[
\sgn(\xi^0)=\sgn(\theta^J)
\]
---see \eqref{xi-0s}. Let us denote by 
\begin{align*}
&(\zeta_{s I},r_J,\xi^K_s,\theta^L), && \big((\zeta_{-s I},r_J,\xi^K_{-s},\theta^L)\big)
\end{align*}
the new variables defined by $\sgn(\theta^J)$ ($-\sgn(\theta^J)$). Now the conclusion can be rephrased as follows: the variables $(\zeta_{\pm s I},r_J,\xi^K_{\pm s},\theta^L)$ are the only new variables on the phase space for which the obstacle considered in this section does not appear.      
    
\section{Summary}

In this paper we proceeded further with the analysis of the family of new variables $\{(\xi^I_\iota,\theta^J)\}$ which are promising for a construction of the space of kinematic quantum states for TEGR and YMTM via the projective methods described in \cite{q-nonl}. In particular,  
\begin{enumerate}
\item we found a criterion which distinguishes differentiable variables in the family: new variables $(\xi^I_\iota,\theta^J)$ are differentiable if the function $\iota$ defining the variables is constant on path-connected subsets of the Hamiltonian configuration space $\Theta$;   
\item for every differentiable variables $(\xi^I_\iota,\theta^J)$ we derived conjugate momenta $(\zeta_{\iota I},r_J)$ as functions of the natural variables $(p_A,\theta^B)$; we also found formulae describing the dependence of $(p_A,\theta^B)$ on $(\zeta_{\iota I},r_J,\xi^K_\iota,\theta^L)$;   
\item we expressed the constraints (and thereby the Hamiltonians) of TEGR and YMTM in terms of new variables $(\zeta_{\iota I},r_J,\xi^K_\iota,\theta^L)$ on the phase space;
\item we showed that for all new variables on the phase space except  $(\zeta_{\pm s I},r_J,\xi^K_{\pm s},\theta^L)$ given by the functions $\iota(\theta^J)=\pm\sgn(\theta^J)$ there appears an obstacle which makes very difficult (if not impossible) the task of defining quantum constraints on the resulting space of kinematic quantum states.   
\end{enumerate}

In \cite{q-stat} we will construct the desired space $\D$ of kinematic quantum states for TEGR using the variables  $(\zeta_{s I},r_J,\xi^K_{s},\theta^L)$. Then we will show that $(\zeta_{-s I},r_J,\xi^K_{-s},\theta^L)$ define the same space $\D$. It will also become obvious that every other new variables $(\zeta_{\iota I},r_J,\xi^K_{\iota},\theta^L)$ define a space of kinematic quantum states which, however, does not seem to be useful for further stages of the quantization based on the Hamiltonian formulation derived in \cite{oko-tegr}. 

Finally let us comment on the constraints of TEGR expressed in terms of new variables in Section \ref{constr-new}. Evidently, the formulae describing the constraints became more complicated in comparison to the original version of the constraints (Section \ref{constr-old}) and this may cause a feeling that it will be much more troublesome to impose  quantum constraints $\{\hat{C}_\lambda\}$ in terms of new variables than in term of the natural ones. However, it is not necessary the case. 

First of all, perhaps it is possible to rewrite the constraints in a simpler way. If not, then note that what really makes the new version of the constraints more complicated are some terms containing the variables $(\xi^I_{\iota})$ and their functions like $\xi^0_\iota$ (see \eqref{xi-0}) or the components $(q_{IJ})$ (see \eqref{q-xi}). Denoting respectively by $x^I_y$ and $x^J_e$ the values of the maps \eqref{con-dof} it is easy to see that 
\[
(x^I_{y_1},\ldots,x^I_{y_M},x^J_{e_1},\ldots,x^J_{e_N}), \quad I,J=1,2,3, 
\]          
are global coordinates on the reduced configuration space $\Theta_{K_{u,\gamma}}$ defined by the set \eqref{K-ug1} of configurational d.o.f.. It turns out \cite{q-nonl} that the measure $dx$ used to define the Hilbert space $\h_\lambda$ (see Section \ref{outline}) is just the coordinate measure $dx^I_{y_1}\ldots dx^J_{e_N}$. This means that it is rather easy to define operators $\{\hat{\kappa}^I_{y_1},\ldots,\hat{\kappa}^I_{y_M}\}$ on $\h_\lambda$ corresponding to $(\xi^I_{\pm s})$:
\[
\hat{\kappa}^I_{y_{\alpha}}\Psi:=x^I_{y_\alpha}\Psi,
\]    
where $\Psi\in\h_\lambda$. Consequently, it is also easy to define operators corresponding to $\xi^0_{\pm s}$ and $q_{IJ}$.  

On the other hand, we may try to keep the original form of the constraints by defining operators corresponding to $\theta^0,p_0$  and $\xi^0$ in terms of operators corresponding to $(\zeta_{\pm s I},r_J,\xi^K_{\pm s},\theta^L)$---taking into account the formulae \eqref{old-new} and \eqref{xi-0} with $\iota(\theta^I)=\pm \sgn(\theta^I)$ we see that the operators corresponding to  $\theta^0,p_0$  and $\xi^0$ can be defined only modulo the factor $\pm\sgn(\theta^I)$ but this is not an obstacle since in the original form of the constraints these variables appear always in pairs like e.g. $p_0\we\theta^0$ which means that the factor $\pm \sgn(\theta^I)$ is here irrelevant. 

Let us note finally that if it turned out that the variables $(\xi^I_\iota)$ could be gauge fixed to zero {\em globally} then this would amount to a considerable simplification of the constraints. Of course, while quantizing TEGR we would like to keep all degrees of freedom unfixed and use a gauge fixing like this only in the last resort.
  
\paragraph{Acknowledgments} This work was partially supported by the grant N N202 104838 of Polish Ministerstwo Nauki i Szkolnictwa Wy\.zszego.

\appendix

\section{Proof of Lemma \ref{TH-bij} \label{app-TH-bij}}

First let us state and prove an auxiliary lemma which will be used while proving Lemma \eqref{TH-bij}:
\begin{lm}
Let $(\theta^A)$ be a quadruplet of one-forms on $\cal M$ satisfying Condition 3b of Lemma \eqref{TH-bij}. Then for every function $(\phi^A)$ on $\cal M$ valued in $\mathbb{M}$ there exists a unique function $M$ on the manifold and a unique vector field $\vec{M}$ on $\cal M$ tangent to the foliation $\{\Sigma_t\}_{t\in\R}$ such that 
\begin{equation}
\phi^A=-M\frac{1}{3!}\eps^A{}_{BCD}*(\theta^B\we\theta^C\we\theta^D)+\vec{M}\lr\theta^A.
\label{phi-dec-0}
\end{equation}
\label{phi-dec-lm}
\end{lm}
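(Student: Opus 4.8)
The plan is to reduce the statement to a pointwise problem in linear algebra on each slice $\Sigma_t$. Fix $t\in\R$ and a point of $\Sigma_t$. Since the right-hand side of \eqref{phi-dec-0} involves only the slice-wise Hodge operator and a vector field $\vec M$ tangent to the foliation, just the pull-back $\varphi^*_t\theta^A$ enters; by Condition 3b this pull-back is an element of $\Theta$, so Lemma \ref{th-bij} and its consequences apply to it. Regarding \eqref{phi-dec-0} as a linear system for the unknown pair $(M,\vec M)\in\R\oplus T_p\Sigma_t$, I note that the domain $\R\oplus T_p\Sigma_t$ is $1+3=4$ dimensional and the target $\mathbb{M}$ is $4$ dimensional, so it suffices to prove that the linear map
\[
L_p:(M,\vec M)\longmapsto M\,n^A+\vec M\lr\theta^A,\qquad n^A:=-\frac{1}{3!}\veps^A{}_{BCD}*(\theta^B\we\theta^C\we\theta^D),
\]
is an isomorphism; existence and uniqueness of the decomposition then follow at once.

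The heart of the argument is to identify $(n^A)$ with (a sign times) the unit time-like normal $(\xi^A)$ solving \eqref{xi-df}. Computing the components of $n^A$ exactly as in \eqref{xi-0s} and \eqref{xi-Is} (these rely only on \eqref{*ttt} and the $\Sigma$-level Lemma \ref{th-bij}, so no circularity with Lemma \ref{TH-bij} arises) gives $n^A=\sgn(\theta^J)\iota(\theta^J)\,\xi^A_\iota$, i.e. $n^A=\pm\xi^A_\iota$. From \eqref{xi-df} I then read off the two facts I need: $\eta_{AB}n^An^B=-1$, so $(n^A)$ is a unit time-like vector, and $n_A\theta^A=0$ as a one-form on $\Sigma_t$, so $\eta_{AB}n^A(\vec M\lr\theta^B)=0$ for every $\vec M\in T_p\Sigma_t$. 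Because $(n^A)$ is time-like, we have the orthogonal splitting $\mathbb{M}=\R n\oplus n^\perp$ with $n^\perp$ three-dimensional and positive definite; the first summand is exactly the range of $M\mapsto Mn$. The map $\vec M\mapsto(\vec M\lr\theta^A)$ lands in $n^\perp$ by the orthogonality just noted, and it is injective: if $\vec M\lr\theta^A=0$ for all $A$ then $q(\vec M,\cdot)=\eta_{AB}(\vec M\lr\theta^A)\theta^B=0$, whence $\vec M=0$ since $\varphi^*_t q$ is Riemannian (Condition 3b). An injective map between three-dimensional spaces $T_p\Sigma_t\to n^\perp$ is a bijection, so $L_p$ is a linear isomorphism.

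Consequently, for each value $\phi^A$ there is a unique solution $(M,\vec M)$; pairing \eqref{phi-dec-0} with $n$ and using $\eta_{AB}n^An^B=-1$ gives the explicit formula $M=-\eta_{AB}\phi^An^B$, and $\vec M$ is recovered as the unique preimage of $\phi^A-Mn^A\in n^\perp$ under the injective map above. Finally, since $\theta^A$, $n^A$ and $\phi^A$ are smooth and $L_p$ depends smoothly on the point through a smoothly invertible matrix, $M$ is a smooth function and $\vec M$ a smooth foliation-tangent vector field on $\cal M$, as required. The main obstacle is the single algebraic identification $n^A=\pm\xi^A_\iota$ (equivalently, that $(n^A)$ is the correctly normalized unit normal to the slices): once that normalization and the orthogonality $n_A\theta^A=0$ are in hand, everything else is the routine $4=1+3$ dimension count furnished by the splitting $\mathbb{M}=\R n\oplus n^\perp$ together with the non-degeneracy coming from Condition 3b.
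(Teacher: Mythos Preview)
Your proof is correct and follows essentially the same route as the paper's: both arguments hinge on identifying the vector $n^A=-\tfrac{1}{3!}\veps^A{}_{BCD}*(\theta^B\we\theta^C\we\theta^D)$ with a solution of \eqref{xi-df}, then use the orthogonal splitting $\mathbb{M}=\R n\oplus n^\perp$ together with the injectivity of $\vec M\mapsto\vec M\lr\theta^A$ (guaranteed by Condition 3b) to conclude. The only real difference in presentation is that the paper first establishes the injectivity of $Y\mapsto\theta^A(Y)$ and the existence of the unit normal $(\xi^A)$ abstractly, and only afterwards cites \cite{os} for the explicit formula \eqref{xi-*}, whereas you compute $n^A=\pm\xi^A_\iota$ directly via the slice-level calculations \eqref{xi-0s}--\eqref{xi-Is}; your care in noting that these computations use only \eqref{*ttt} and Lemma~\ref{th-bij} (hence no circularity with Lemma~\ref{TH-bij}) is a worthwhile addition.
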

\begin{proof}
Given a point $(t,y)\in\R\times\Sigma={\cal M}$, consider the following linear map
\begin{equation}
T_{(t,y)}\Sigma_t\ni Y\mapsto \theta^A(Y)\in\mathbb{M}.
\label{Y-thY}
\end{equation}
The map is {\em injective} and, equivalently, the image of this map is three-dimensional and, equivalently, the kernel of the map is zero-dimensional---otherwise there would exists a {\em non-zero} vector $Y\in T_{(t,y)}\Sigma_t$ such that
\[
q(Y,Y)=\eta_{AB}\theta^A(Y)\theta^B(Y)=0,
\] 
which would contradict Condition 3b. This fact implies that there exists exactly two vectors $\xi^A_{(t,y)}$ in $\mathbb{M}$ orthogonal to the image and normalized to $-1$.

Consequently, there exists exactly two continuous functions $(\xi^A):{\cal M}\to\mathbb{M}$ values of which are normalized vectors orthogonal to images of \eqref{Y-thY}. Clearly, for every $t$ each such a function satisfies \eqref{xi-df}.

Fix one of the two functions $(\xi^A)$. Properties of $(\xi^A)$ and the map \eqref{Y-thY} guarantee that every function $(\phi^A)$ can be {\em uniquely} decomposed into a function $M$ and a vector field $\vec{M}$ tangent to the foliation $\{\Sigma_t\}_{t\in\R}$:
\begin{equation}
\phi^A=M\xi^A+\theta^A(\vec{M})=M\xi^A+\vec{M}\lr \theta^A.
\label{phi-dec}
\end{equation}
On the other hand, one of the two (continuous) functions $(\xi^A)$ reads \cite{os}
 \begin{equation}
\xi^A=-\frac{1}{3!}\veps^A{}_{BCD}*(\theta^B\we\theta^C\we\theta^D).
\label{xi-*}
\end{equation}
\end{proof}

\begin{proof}[Proof of Lemma \ref{TH-bij}]
We will prove the lemma by showing that the map \eqref{NNth-bth} is a bijection. 

\paragraph{The map \eqref{NNth-bth} is injective} Assume that $(N,\vec{N},\theta^A)$ and $(N',\vec{N}',\theta^{\prime A})$ are mapped by \eqref{NNth-bth} to the same $(\bth^A)$. Then 
\[
\bed t[(N-N')\xi^A+\vec{N}\lr\theta^A-\vec{N}'\lr\theta^{\prime A}]+\theta^A-\theta^{\prime A}=0.
\]  
By virtue of Condition 3a and Lemma \eqref{phi-dec-lm} $N=N'$ and $\vec{N}=\vec{N}'$. This means that $\theta^A=\theta^{\prime A}$. 

\paragraph{The image of the map \eqref{NNth-bth} is contained in $\bld{\Theta}$} 
It was shown in \cite{os} that if $(\bth^A)$ is given by \eqref{NNth-bth} then:
\begin{equation}
\det(\bth^A_\alpha)= {N}\sqrt{\det (q_{ij})}.
\label{det-pm}
\end{equation}
Here $(\bth^A_\alpha)$, $\alpha=0,1,2,3$,  are components of $\bth^A$ in a (local) coordinate frame $(y^\alpha)\equiv (t,y^i)$, $i=1,2,3$,   on $\cal M$ compatible with its orientation defined by a (local) coordinate frame $(y^i)$ on $\Sigma$; $(q_{ij})$ are components of $q$ in the frame $(y^i)$ on $\Sigma_t$. By virtue of Condition 1 $\det(\bth^A_\alpha)>0$ which means that $(\bth^A)$ is a global coframe compatible with the orientation of $\cal M$. 

If $Y$ is a vector tangent to $\Sigma_t$ then $Y\lr\bed t=0$. Hence
\[
g(Y,Y)=\eta_{AB}(Y\lr\bth^A)(Y\lr\bth^A)=\eta_{AB}(Y\lr\theta^A)(Y\lr\theta^A)=q(Y,Y).
\]
By virtue of Condition 3b $\Sigma_t$ is spatial with respect to $g$.

\paragraph{The map \eqref{NNth-bth} is surjective} Let $(\bth^A)\in\bld{\Theta}$. $\bth^A$ is a sum \cite{ham-diff,mielke} of $\bed t \,\partial_t\lr\bth^A$ and a one-form $\partial_t\lr(\bed t\we\bth^A)=:\theta^A$ which satisfies Condition 3a. By virtue of Lemma \ref{phi-dec-lm} $(\partial_t\bth^A)$ define unambiguously a function $N$ and a vector field $\vec{N}$ which satisfies Condition 2. If $Y$ is a vector tangent to $\Sigma_t$ then
\[
\theta^A(Y)=Y\lr\theta^A=Y\lr\Big(\partial_t\lr(\bed t\we\bth^A)\Big)=\bth^A(Y)
\]  
because $Y\lr\bed t=0$. Hence, if $Y,Y'$ are tangent to $\Sigma_t$ at the same point then  
\[
{q}(Y,Y')=\eta_{AB}\theta^A(Y)\theta^B(Y')=\eta_{AB}\bth^A(Y)\bth^B(Y')=g(Y,Y').
\]
Recall that $\Sigma_t$ is space-like with respect to $g$. Therefore $(\theta^A)$ meets Condition 3b.

Obviously $(\bth^A)$ can be restored from the triplet $({N},\vec{N},\theta^A)$ by means of \eqref{NNth-bth}. By virtue of \eqref{det-pm} and Condition 3b the function $N$ is positive everywhere. 
\end{proof}

\section{Constraints of TEGR and YMTM in terms of new va\-riab\-les---derivation \label{constr-der}}

The goal of the present section is to rewrite the constraints of YMTM and TEGR presented in Section \ref{constr-old} in terms of the new variables $(\zeta_{\iota I},r_J,\xi_\iota^K,\theta^L)$---obviously, this also amounts to rewriting the Hamiltonians of the two theories. 

The Hamiltonian formulations of TEGR and YMTM in \cite{oko-tegr} and \cite{os} we derived under an assumption that $\Sigma$ is a compact manifold without boundary. Here we will keep the assumption.

As mentioned at the beginning of Section \ref{constr-new} we will use Equations \eqref{old-new} written in a bit simpler form 
\begin{equation}
\begin{aligned}
&p_0=\xi_\iota^0\vec{\theta}^I\lr\zeta_{\iota I},&& p_I=r_I-\xi_{\iota I}\,\vec{\theta}^J\lr\zeta_{\iota J},\\
&\theta^0=\frac{\xi_{\iota I}}{\xi_\iota^0}\,\theta^I, & & \theta^I=\theta^I.
\end{aligned} 
\label{old-new-1}
\end{equation}
Moreover, we will express $(\xi^A)$ defined by \eqref{xi-*} appearing in the constraints by means of the formulae \eqref{xi-0s} and \eqref{xi-Is}:
\begin{align}
\xi^0&=\sgn(\theta^J)|\xi^0_\iota|=\frac{\sgn(\theta^J)}{\iota(\theta^J)}\xi^0_\iota,& \xi^I&=\frac{\sgn(\theta^J)}{\iota(\theta^J)}\xi^I_\iota.
\label{xis-ximu}
\end{align} 
Let us emphasize that both in \eqref{old-new-1} and \eqref{xis-ximu} $\xi_\iota^0$ is not an independent variable but rather a function of $(\xi_\iota^I,\theta^J)$ given by \eqref{xi-0}. 

In the calculations below we will use also the following identity:
\begin{equation}
\vec{\theta}^I\lr\theta^J=\delta^{IJ}+\xi_\iota^I\xi_\iota^J=\bar{q}^{IJ},
\label{tt-q}
\end{equation}
where $(\bar{q}^{IJ})$ is the inverse matrix to $(q_{IJ})$ (see \eqref{bar-qIJ})---the first equality above follows immediately from the identity \cite{oko-tegr} 
\[
\vec{\theta}^A\lr\theta^B=\eta^{AB}+\xi^A\xi^B,
\]
where $(\xi^A)$ is any solution of \eqref{xi-df}. Let us emphasize that in all formulae below both $q_{IJ}$ and $\bar{q}^{IJ}$ will be consider as functions of $(\xi^I_\iota)$ (see \eqref{q-xi} and \eqref{bar-qIJ}).   

\subsection{The constraints of TEGR}

\subsubsection{The scalar constraint}

The scalar constraint \eqref{S(N)} of TEGR consists of five terms which will be transformed in turn.

\paragraph{The first term}  The first term can be rewritten as follows
\begin{multline*}
p_A\we\theta^B\we*(p_B\we\theta^A)=p_0\we\theta^0\we*(p_0\we\theta^0)+2p_I\we\theta^0\we*(p_0\we\theta^I)+p_I\we\theta^J\we*(p_J\we\theta^I)=\\=\vec{\theta}^I\lr\zeta_{\iota I}\we\xi_{\iota J}\theta^J\we*(\vec{\theta}^K\lr\zeta_{\iota K}\we\xi_{\iota L}\theta^L)+2\vec{\theta}^L\lr\zeta_{\iota L}\we\theta^I\we*(r_I\we\xi_{\iota K}\theta^K)-\\-2\vec{\theta}^L\lr\zeta_{\iota L}\we\theta^I\we*(\xi_{\iota I}\vec{\theta}^K\lr\zeta_{\iota K}\we\xi_{\iota J}\theta^J)+r_I\we\theta^J\we*(r_J\we\theta^I)-\\-2\xi_{\iota I}\vec{\theta}^K\lr\zeta_{\iota K}\we\theta^J\we*(r_J\we\theta^I)+\xi_{\iota I}\vec{\theta}^K\lr\zeta_{\iota K}\we\theta^J\we*(\xi_{\iota J}\vec{\theta}^L\lr\zeta_{\iota L}\we\theta^I)=\\=r_I\we\theta^J\we*(r_J\we\theta^I).
\end{multline*}
\paragraph{The second term} To express the second term in \eqref{S(N)} as a function of the new variables it is enough to transform the factor $p_A\we\theta^A$:
\begin{equation*}
p_A\we\theta^A=p_0\we\theta^0+p_I\we\theta^I=\vec{\theta}^I\lr\zeta_{\iota I}\we\xi_{\iota J}\theta^J+(r_I-\xi_{\iota I}\vec{\theta}^K\lr\zeta_{\iota K})\we\theta^I=r_I\we\theta^I.
\end{equation*}

\paragraph{The third term} Consider the integrated third term: 
\begin{equation}
-\int_\Sigma M\xi^Adp_A=\int_\Sigma d(M\xi^A)\we p_A=\int_\Sigma dM\we \xi^Ap_A+M d\xi^A\we p_A.
\label{Nxip}
\end{equation}
Let us focus on the first of the two resulting terms:
\begin{multline}
\xi^Ap_A=\frac{\sgn(\theta^J)}{\iota(\theta^J)}(\xi_\iota^0p_0+\xi_\iota^Ip_I)=\frac{\sgn(\theta^J)}{\iota(\theta^J)}\big((\xi_\iota^0)^2\vec{\theta}^J\lr\zeta_{\iota J}+\xi_\iota^I(r_I-\xi_{\iota I}\vec{\theta}^J\lr \zeta_{\iota J})\big)=\\=\frac{\sgn(\theta^J)}{\iota(\theta^J)}\Big([(\xi_\iota^0)^2-\xi_\iota^I\xi_{\iota I}]\vec{\theta}^J\lr\zeta_{\iota J}+\xi_\iota^Ir_I\Big)=\frac{\sgn(\theta^J)}{\iota(\theta^J)}\big(-\xi_\iota^A\xi_{\iota A}\vec{\theta}^J\lr\zeta_{\iota J}+\xi_\iota^Ir_I\big)=\\=\frac{\sgn(\theta^J)}{\iota(\theta^J)}(\vec{\theta}^J\lr\zeta_{\iota J}+\xi_\iota^Ir_I),
\label{xi-p}
\end{multline}
where in the last step we used the second equation in \eqref{xi-df}. The last term in \eqref{Nxip}
\begin{multline*}
d\xi^A\we p_A=\frac{\sgn(\theta^J)}{\iota(\theta^J)}(d\xi_\iota^0\we p_0+d\xi_\iota^I\we p_I)=\frac{\sgn(\theta^J)}{\iota(\theta^J)}\big(\xi_\iota^0d\xi_\iota^0\we\vec{\theta}^J\lr\zeta_{\iota J}+\\+d\xi_\iota^I\we(r_I-\xi_{\iota I}\vec{\theta}^J\lr \zeta_{\iota J})\big)=\frac{\sgn(\theta^J)}{\iota(\theta^J)}\Big([\xi_\iota^0d\xi_\iota^0-\xi_{\iota I}d\xi_\iota^I]\we\vec{\theta}^J\lr\zeta_{\iota J}+d\xi_\iota^I\we r_I\Big)=\\=\frac{\sgn(\theta^J)}{\iota(\theta^J)}d\xi_\iota^I\we r_I
\end{multline*}
---here in the last step we applied the identity
\[
\xi_\iota^0d\xi_\iota^0-\xi_{\iota I}d\xi_\iota^I=-\xi_\iota^Ad\xi_{\iota A}=0
\]
which follows from the second equation in \eqref{xi-df}. Thus
\begin{multline}
-\int_\Sigma M\xi^Adp_A=\frac{\sgn(\theta^J)}{\iota(\theta^J)}\int_\Sigma dM\we (\vec{\theta}^J\lr\zeta_{\iota J}+\xi_\iota^Ir_I)+Md\xi_\iota^I\we r_I=\\=\frac{\sgn(\theta^J)}{\iota(\theta^J)}\int_\Sigma dM\we \vec{\theta}^J\lr\zeta_{\iota J}+d(M\xi_\iota^I)\we r_I=-\frac{\sgn(\theta^J)}{\iota(\theta^J)}\int_\Sigma M\Big( d(\vec{\theta}^J\lr\zeta_{\iota J})+\xi_\iota^I\we dr_I\Big). 
\label{Nxip-fin}
\end{multline}

\paragraph{The fourth term}  The next term in \eqref{S(N)} can be expressed as follows:
\begin{multline*}
d\theta_A\we\theta^B\we*(d\theta_B\we\theta^A)=d\theta^0\we\theta^0\we*(d\theta^0\we\theta^0)-2d\theta^0\we\theta^I\we*(d\theta_I\we\theta^0)+\\+d\theta_I\we\theta^J\we*(d\theta_I\we\theta^J).
\end{multline*} 
Let us express first $d\theta^0$ as a function of $(\xi_\iota^I,\theta^J)$ and their exterior derivatives:
\begin{multline}
d\theta^0=d\Big(\frac{\xi_{\iota I}}{\xi_\iota^0}\theta^I\Big)=d(\alpha_I\theta^I)=d\alpha_I\we\theta^I+\alpha_Id\theta^I=\\=\frac{\partial\alpha_I}{\partial\xi_\iota^J}d\xi_\iota^J\we\theta^I+\frac{\xi_{\iota I}}{\xi_\iota^0}d\theta^I=\frac{1}{\xi_\iota^0}\big(q_{IJ}d\xi_\iota^J\we\theta^I+{\xi_{\iota I}}d\theta^I\big)
\label{dt0}
\end{multline}
---in these calculations we used the second formula in \eqref{xi-al} and \eqref{al/xi}. Now let us calculate the factor $d\theta^0\we\theta^0$---by virtue of \eqref{dt0} and \eqref{q-xi}
\begin{multline}
d\theta^0\we\theta^0=({\xi_\iota^0})^{-2}({q_{IJ}}d\xi_\iota^J\we\theta^I+{\xi_{\iota I}}d\theta^I)\we\xi_{\iota K}\theta^K=\\=({\xi_\iota^0})^{-2}\Big([\delta_{IJ}-(\xi_\iota^0)^{-2}\xi_{\iota I}\xi_{\iota J}]d\xi_\iota^J\we\theta^I+{\xi_{\iota I}}d\theta^I\Big)\we\xi_{\iota K}\theta^K=\\=({\xi_\iota^0})^{-2} (d\xi_{\iota I}\we\theta^I+{\xi_{\iota I}}d\theta^I)\we\xi_{\iota K}\theta^K=({\xi_\iota^0})^{-2} d(\xi_{\iota I}\theta^I)\we\xi_{\iota K}\theta^K
\label{dt0-t0}
\end{multline}
since $\xi_{\iota I}\theta^I\we\xi_{\iota K}\theta^K=0$. Thus 
\begin{multline*}
d\theta_A\we\theta^B\we*(d\theta_B\we\theta^A)=({\xi_\iota^0})^{-4} d(\xi_{\iota I}\theta^I)\we\xi_{\iota J}\theta^J\we*(d(\xi_{\iota K}\theta^K)\we\xi_{\iota L}\theta^L)-\\-2(\xi_\iota^0)^{-2}(q_{IJ}d\xi_\iota^J\we\theta^I+\xi_{\iota I} d\theta^I)\we\theta^K\we*(d\theta_K\we\xi_{\iota L}\theta^L)+d\theta_I\we\theta^J\we*(d\theta_I\we\theta^J).
\end{multline*}

\paragraph{The fifth term} Regarding the fifth term in \eqref{S(N)} it is enough to calculate  
\begin{equation*}
d\theta_A\we\theta^A=-d\theta^0\we\theta^0+d\theta_I\we\theta^I=-({\xi_\iota^0})^{-2} d(\xi_{\iota I}\theta^I)\we\xi_{\iota K}\theta^K+d\theta_I\we\theta^I,
\end{equation*}
where we used \eqref{dt0-t0}.

Gathering all the partial results we obtain \eqref{scal-c}.

\subsubsection{The vector constraint}

Let us now turn to the vector constraint \eqref{V(N)}. It was shown in \cite{os} that
\[
V(\vec{M})=\int_\Sigma\LL_{\vec{M}}\theta^A\we p_A=\int_\Sigma\LL_{\vec{M}}\theta^0\we p_0+\LL_{\vec{M}}\theta^I\we p_I,
\]
where $\LL_{\vec{M}}$ denotes the Lie derivative on $\Sigma$ with respect to the vector field $\vec{M}$. It is easy to check that
\[
\LL_{\vec{M}}\theta^0=\frac{q_{IJ}}{\xi_\iota^0}(\LL_{\vec{M}}\xi_\iota^J)\theta^I+\frac{\xi_{\iota I}}{\xi_\iota^0}\LL_{\vec{M}}\theta^I
\]    
(see \eqref{dot-t} and \eqref{dt0}). Thus
\begin{multline*}
V(\vec{M})=\int_\Sigma\Big(\frac{q_{IJ}}{\xi_\iota^0}(\LL_{\vec{M}}\xi_\iota^J)\theta^I+\frac{\xi_{\iota I}}{\xi_\iota^0}\LL_{\vec{M}}\theta^I\Big)\we \xi_\iota^0\vec{\theta}^K\lr\zeta_{\iota K}+\LL_{\vec{M}}\theta^I\we(r_I-\xi_{\iota I}\,\vec{\theta}^J\lr\zeta_{\iota J})=\\=\int_\Sigma {q_{IJ}}(\LL_{\vec{M}}\xi_\iota^J)\theta^I\we \vec{\theta}^K\lr\zeta_{\iota K}+\LL_{\vec{M}}\theta^I\we r_I=\int_\Sigma {q_{IJ}}(\LL_{\vec{M}}\xi_\iota^J)\vec{\theta}^K\lr\theta^I \we \zeta_{\iota K}+\LL_{\vec{M}}\theta^I\we r_I=\\=\int_\Sigma (\LL_{\vec{M}}\xi_\iota^K)\zeta_{\iota K}+\LL_{\vec{M}}\theta^I\we r_I,
\end{multline*}
where in the last step we used \eqref{tt-q}. The well known expression 
\[
\LL_{\vec{M}}=d\circ\vec{M}\lr+\vec{M}\lr \circ d
\] 
allows us to rewrite the result above in a form \eqref{v(n)-new} free of derivatives of $\vec{M}$ (see also \cite{os}).

\subsubsection{The boost constraint}   

Using \eqref{dt0}, \eqref{xi-p} and \eqref{q-xi} we obtain
\begin{multline*}
B(a)=\int_\Sigma a\we(\theta^A\we*d\theta_A+\xi^Ap_A)=\int_\Sigma a\we(-\theta^0\we*d\theta^0+\theta^I\we*d\theta_I+\xi^Ap_A)=\\=\int_\Sigma a\we\Big(-\frac{1}{(\xi_\iota^0)^2}\xi_{\iota I}\theta^I\we*(q_{JK}d\xi_\iota^J\we\theta^K+\xi_{\iota J}d\theta^J)+\theta^I\we*d\theta_I+\frac{\sgn(\theta^L)}{\iota(\theta^L)}\big(\vec{\theta}^I\lr\zeta_{\iota I}+\xi_\iota^Ir_I\big)\Big)=\\=\int_\Sigma a\we\Big(-\frac{\xi_{\iota I}q_{JK}}{(\xi_\iota^0)^2}\theta^I\we*(d\xi_\iota^J\we\theta^K)+q_{IJ}\theta^I\we*d\theta^J+\frac{\sgn(\theta^L)}{\iota(\theta^L)}\big(\vec{\theta}^I\lr\zeta_{\iota I}+\xi_\iota^I r_I\big)\Big),
\end{multline*}
which coincides with \eqref{boost-c}.

\subsubsection{The rotation constraint}

By virtue of \eqref{dt0} 
\begin{multline*}
R(b)=\int_\Sigma b\we(\theta^A\we*p_A-\xi^Ad\theta_A)=\int_\Sigma b\we \Big( \xi_{\iota I}\theta^I\we*(\vec{\theta}^J\lr\zeta_{\iota J})+\theta^I\we*(r_I-\xi_{\iota I}\vec{\theta}^J\lr\zeta_{\iota J})+\\+\frac{\sgn(\theta^L)}{\iota(\theta^L)}\big(q_{IJ}d\xi_\iota^I\we\theta^J+\xi_{\iota I}d\theta^I-\xi_\iota^I d\theta_I\big)\Big)=\int_\Sigma b\we(\theta^I\we*r_I+\frac{\sgn(\theta^L)}{\iota(\theta^L)}q_{IJ}d\xi_\iota^I\we\theta^J),
\end{multline*}
which coincides with \eqref{rot-c}.

\subsection{The constraints of YMTM}

It is enough to rewrite the scalar constraint \eqref{s(N)} since the vector constraint $v(\vec{M})$ of YMTM coincides with the vector constraint $V(\vec{M})$ of TEGR. The scalar constraint consists of three terms which will be transformed in turn.

\paragraph{The first term} The first term \eqref{s(N)} can be rewritten as follows:
\begin{multline}
p^A\we* p_A=-p_0\we* p_0+p^I\we* p_I=\\=-(\xi_\iota^0)^2\vec{\theta}^I\lr\zeta_{\iota I}\we*(\vec{\theta}^J\lr\zeta_{\iota J})+(r^I-\xi_\iota^I\vec{\theta}^K\lr\zeta_{\iota K})\we*(r_I-\xi_{\iota I}\vec{\theta}^L\lr\zeta_{\iota L})=\\=-(\xi_\iota^0)^2\vec{\theta}^I\lr\zeta_{\iota I}\we*(\vec{\theta}^J\lr\zeta_{\iota J})+r^I\we*r_I-2r^I\we*(\xi_{\iota I}\vec{\theta}^K\lr\zeta_{\iota K})+\\+\xi_\iota^I\xi_{\iota I}\,\vec{\theta}^K\lr\zeta_{\iota K}\we*(\vec{\theta}^L\lr\zeta_{\iota L})=-\vec{\theta}^I\lr\zeta_{\iota I}\we*(\vec{\theta}^J\lr\zeta_{\iota J})+r^I\we*r_I-2\xi_{\iota I}r^I\we*(\vec{\theta}^K\lr\zeta_{\iota K}),
\label{pp-0}
\end{multline}
where in the last step we applied the second equation in \eqref{xi-df}. Using \eqref{a-b} and the fact that $*\zeta_{\iota I}$ is a zero-form, that is, a function we can transform the first term of the result above as follows:
\begin{multline}
-\vec{\theta}^I\lr\zeta_{\iota I}\we*(\vec{\theta}^J\lr\zeta_{\iota J})=-*(*\zeta_{\iota I}\we\theta^I)\we *\zeta_{\iota J}\we\theta^J= -*\zeta_{\iota I}*\zeta_{\iota J}*\theta^I\we\theta^J=\\=-*\zeta_{\iota I}*\zeta_{\iota J}**(*\theta^I\we\theta^J)=-\zeta_{\iota I}*\zeta_{\iota J}\,\vec{\theta}^J\lr\theta^I=-\bar{q}^{IJ}\zeta_{\iota I}*\zeta_{\iota J},
\label{tz*tz}
\end{multline}
where in the last step we used \eqref{tt-q}. Let us also simplify the last term in \eqref{pp-0}: again by virtue of \eqref{a-b}
\begin{equation*}
-2\xi_{\iota I}r^I\we*(\vec{\theta}^K\lr\zeta_{\iota K})=-2\xi_{\iota I}r^I\we *\zeta_{\iota K} \theta^K=-2\xi_{\iota I}*\zeta_{\iota K} r^I\we{\theta}^K. 
\end{equation*}
Setting this result and \eqref{tz*tz} to \eqref{pp-0} gives 
\begin{equation}
p^A\we* p_A=-\bar{q}^{IJ}\zeta_{\iota I}*\zeta_{\iota J}+r^I\we*r_I-2\xi_\iota^I*\zeta_{\iota K}\, r_I\we{\theta}^K.
\label{pp}
\end{equation}

\paragraph{The second term} The second term in \eqref{s(N)} is already calculated---see \eqref{Nxip-fin}.

\paragraph{The third term} To calculate the third term in \eqref{s(N)} in terms of the new variables we apply \eqref{dt0}:
\begin{multline*}
d\theta^A\we*d\theta_A=-d\theta^0\we*d\theta^0+d\theta^I\we*d\theta_I=-\frac{q_{IJ}q_{KL}}{(\xi_\iota^0)^2}d\xi_\iota^I\we\theta^J\we*(d\xi_\iota^K\we\theta^L)-\\-\frac{2q_{IJ}}{(\xi_\iota^0)^2}d\xi_\iota^I\we\theta^J\we*(\xi_{\iota K}d\theta^K)-\frac{\xi_{\iota I}\xi_{\iota J}}{(\xi_\iota^0)^2}d\theta^I\we*d\theta^J+d\theta^I\we*d\theta_I=\\=-\frac{q_{IJ}q_{KL}}{(\xi_\iota^0)^2}d\xi_\iota^I\we\theta^J\we*(d\xi_\iota^K\we\theta^L)-\frac{2q_{IJ}}{(\xi_\iota^0)^2}d\xi_\iota^I\we\theta^J\we*(\xi_{\iota K}d\theta^K)+q_{IJ}d\theta^I\we*d\theta^J,
\end{multline*} 
where in the last step we applied \eqref{xi-0} and \eqref{q-xi}.

Gathering the partial results we obtain \eqref{scal-c-ymtm}.



\end{document}